\newtheorem{theorem}{Theorem}
\newtheorem{lemma}{Lemma}
\newtheorem{corollary}{Corollary}
\newtheorem{definition}{Definition}
\newtcolorbox[auto counter]{mybox}[2][]{
	enhanced,
	breakable,
	colback=blue!5!white,
	colframe=blue!75!black,
	fonttitle=\bfseries,
	title=Box \thetcbcounter: #2,#1
}
\begin{document}

\title{Stream privacy amplification for quantum cryptography}

\author{Yizhi Huang}
\author{Xingjian Zhang}
\author{Xiongfeng Ma}
\email{xma@tsinghua.edu.cn}
\affiliation{Center for Quantum Information, Institute for Interdisciplinary Information Sciences, Tsinghua University, Beijing 100084, China}
	
\begin{abstract}
Privacy amplification is the key step to guarantee the security of quantum communication. The existing security proofs require accumulating a large number of raw key bits for privacy amplification. This is similar to block ciphers in classical cryptography that would delay the final key generation since an entire block must be accumulated before privacy amplification. Moreover, any leftover errors after information reconciliation would corrupt the entire block. By modifying the security proof based on quantum error correction, we develop a stream privacy amplification scheme, which resembles the classical stream cipher. This scheme can output the final key in a stream way, prevent error from spreading, and hence can put privacy amplification before information reconciliation. The stream scheme can also help to enhance the security of trusted-relay quantum networks and improve the practicality of randomness extraction for quantum random number generators.
\end{abstract}

\maketitle 

\section{INTRODUCTION}
As one of the first applications of quantum-information science, quantum key distribution (QKD) aims at establishing an information-theoretic secure key between two distant parties, Alice and Bob \cite{bennett1984proceedings,ekert1991quantum}. It applies fundamental laws of quantum physics to guarantee secure communication. The procedures of QKD can be divided into two parts: quantum operation and data postprocessing. Quantum operation includes the preparation, transmission, and measurement of quantum states for Alice and Bob to share raw key bits. The purpose of postprocessing is to extract an identical and private key from the raw data. This can be guaranteed by information reconciliation and privacy amplification, where the former guarantees the identity of key strings and the latter removes any potential information leakage to a possible eavesdropper, Eve~\cite{bennett1988privacy,maurer1993secret}.

Over the past three decades of development, QKD has experienced tremendous advancement from initial demonstrations in laboratories to practical implementation~\cite{xu2020secure}. In fiber, the communication distance has been pushed over 500 km \cite{chen2020sending,fang2020implementation}. Using quantum satellites, the communication distance has reached an intercontinental level \cite{liao2018satellite}. Researchers have also pushed QKD to a secret key rate of more than 10 Mbits/s~\cite{islam2017provably}. In addition to point-to-point linking, a number of field-test QKD networks have been conducted in many countries~\cite{elliott2005current,peev2009secoqc,stucki2011long, sasaki2011field,chen2009field,wang2010field}. In particular, China has recently successfully completed the 2000-km-long fiber-optic backbone link between Beijing and Shanghai~\cite{chen2021integrated}. Therefore, QKD is already a mature technique for real-life applications~\cite{qiu2014quantum}.
	
With the exciting developments on the experimental side, improving the practicality of QKD systems has become one of the essential issues in the field. Among all the stages in a QKD session, privacy amplification is one of the bottlenecks, which might still be technically difficult to implement in some realistic conditions. Existing privacy-amplification methods run as follows. After information reconciliation, Alice randomly chooses a hash function and sends it to Bob via a public classical channel. Both users hash their reconciled key strings with the hash function and obtain the final key. In practice, the family of Toeplitz-matrix hashing is widely adopted. Due to the matrix multiplication, privacy amplification can only process a block of reconciled key bits at a time.
Though recent security-analysis techniques have shown that the key rate can still be positive, with critical block sizes on the order of kilobits~\cite{Fung2010Finite}, smaller block sizes tend to make privacy amplification inefficient, resulting in lower key rates.
To guarantee the efficiency of privacy amplification, the block size is normally large in practice, typically on the order of megabits~\cite{Fung2010Finite}. Alice and Bob cannot perform privacy amplification until they accumulate an entire block of a reconciled key. This block feature of privacy amplification would cause unpleasant delays in some practical scenarios. For instance, in the satellite case, since the quantum signals can be transferred only when the ground station can ``see" the satellite with a clear atmosphere, it may take the satellite several orbits to accumulate enough data for one block of privacy amplification. Due to the unpredictable condition of the atmosphere, such a delay could take as long as days \cite{Liao2017Space}.

There are other cases where block privacy amplification could cause problems. If the ratio of the reconciled key to the final key is extremely large, the computational cost for the hash function can be very heavy. Such an issue has been encountered in randomness extraction of quantum random number generation (QRNG) as well, especially in the device-independent case \cite{liu2018device}. Due to the similarity between the definitions of randomness extraction and privacy amplification \cite{bennett1995generalized}, randomness extractors can be constructed by using universal hashing functions \cite{Ma2013Extractor,hayashi2016more} and have the same block feature as existing privacy-amplification schemes. The problem of the heavy computational cost is more serious in QRNG due to the larger amount of data and it restricts further improvement of the real-time generation speed.

Moreover, with block privacy amplification, the leftover error in information reconciliation would spread out to the entire block. Information reconciliation is normally done by bilateral error correction. For some error-correction schemes, there is a small probability of leaving some errors uncorrected. If Alice's and Bob's two strings are not exactly the same, the output strings from block privacy amplification will be totally different due to the universal property of the hash family.

Furthermore, existing quantum network implementations rely on trusted relays for key distribution \cite{chen2021integrated}, due to the limited transmission distance of point-to-point QKD links. Trusted QKD networks have been widely used in building intercity or backbone QKD communication links, such as the Hefei network \cite{Chen2021implementation} and the Beijing-Shanghai backbone link~\cite{chen2021integrated}. In the trusted-network scenario, all intermediate relays must be trusted because each of the relays can produce the final key. If one relay becomes compromised, the security of the whole network will be seriously threatened. In practice, it is challenging and expensive to guarantee high-level secure relays, which hinders the further commercialization and application of QKD. There are some attempts to reduce the dependence on trusted relays. Unfortunately, these solutions either require duplicate resources \cite{zhou2019security} or still assume that the intermediate relays do not attack the network intentionally \cite{lutkenhaus2016system}.

To address these issues, we reexamine the security proof for QKD based on quantum error correction \cite{lo1999unconditional}, where privacy amplification is reduced from phase-error correction \cite{shor2000simple}.
As a clear and simple showcase, we mainly focus on the Bennett-Brassard-1984 QKD protocol (BB84)~\cite{bennett1984proceedings} and go back to the original Lo-Chau security proof~\cite{lo1999unconditional}.
By rearranging the phase-error-correcting gates and error-syndrome measurement, we divide privacy amplification into two steps: (a) generate pseudorandom bits from a preshared key seed and a hash function; and (b) XOR the pseudorandom string from (a) and the reconciled key. We also prove that the hashing matrix in (a) can be reused. Then, Alice and Bob can generate pseudorandom bits offline. For real-time privacy amplification, they only need to perform the XOR operation in a bitwise manner. In the spirit of stream ciphers, the new scheme is conceptually different from the existing block privacy-amplification schemes. Such an essential difference guarantees the new scheme with the following practical features:
(1) it can output final key bits in a stream way; (2) it will not spread the errors of the input bit stings; and (3) it can be carried out ahead of information reconciliation.
 
The rest of this paper is organized as follows. In Sec.~\ref{Sc:preliminary}, we review QKD protocols and recap the security proof based on quantum error correction and its reduction to the prepare-and-measure case. In Sec.~\ref{Sc:newPA}, we reduce quantum phase-error correction to a new privacy-amplification procedure with a stream output and introduce its possible combination with delayed privacy amplification, classical cryptography, and QRNG. Finally, we conclude the paper and discuss possible future directions in Sec.~\ref{Sc:conclusion}.


\section{Preliminary}\label{Sc:preliminary}
\subsection{QKD protocols and security definition}
Here, we introduce the first and probably the most well-known QKD protocol, BB84 \cite{bennett1984proceedings}, and its entanglement version, Bennett-Brassard-Mermin-1992 (BBM92) \cite{bennett1992quantum}. Then, we show how their security can be established.

The procedures of the BB84 protocol are listed in Box \ref{box:BB84}. Alice and Bob have a quantum channel for state transmission and an authenticated classical channel for data postprocessing. In practice, photons are widely used as the information carrier in quantum communication. Various degrees of freedom of a photon can be used for qubit encoding. For example, the four BB84 states $\{\ket{0},\ket{1},\ket{+},\ket{-}\}$ can be encoded into four polarization states of a photon, namely, vertical, horizontal, $45^{\circ}$, and $135^{\circ}$, respectively.

\begin{mybox}[label={box:BB84}]{{BB84 protocol \cite{bennett1984proceedings}, a prepare-and-measure protocol.}}
	\begin{enumerate}[(1)]
		\item
		\emph{State preparation:} Alice randomly prepares qubits in one of the four states, $\ket{0}$, $\ket{1}$, $\ket{+}$, and $\ket{-}$, where $\ket{0}$ and $\ket{1}$ form the $Z$ basis and $\ket{\pm}=(\ket{0}\pm\ket{1})/\sqrt2$ form the $X$ basis.
		\item
		\emph{State transmission and measurement:} Alice sends the encoded qubits to Bob through a quantum channel. Bob measures each received qubit in the $Z$ or $X$ basis randomly.
		\item
		\emph{Key sifting:} Alice and Bob announce their choices of bases publicly through an authenticated classical channel. They keep only the bits where they use the same bases and discard the rest. They accumulate $n$ sifted key bits.
		\item
		\emph{Key distillation:} Alice and Bob perform classical postprocessing, including information reconciliation and privacy amplification, to generate a secret key from the $n$-bit sifted key.
	\end{enumerate}
\end{mybox}

When Alice prepares a qubit in the $Z$ basis and Bob measures in the same basis, an error occurs when their bit values are different. The errors in the $Z$ basis are called \emph{bit errors}. Similarly, when they operate in the $X$ basis, a \emph{phase error} occurs when their bits are different. Let us denote the bit- and phase-error rates by $e_b$ and $e_p$, respectively:
\begin{equation} \label{eq:biterror}
	\begin{aligned}
		e_b &= \frac{\text{number of bit errors}}{n}, \\
		e_p &= \frac{\text{number of phase errors}}{n}. \\
	\end{aligned}
\end{equation}
Measurements on the two bases are noncommuting. Due to quantum mechanics, with any attempt to extract nontrivial information from the qubits, Eve would inevitably introduce disturbance, such as errors, making $e_b,e_p\neq0$. Intuitively, Alice and Bob share a private key if $e_b=e_p=0$.

Before the security analysis, we introduce an entanglement-based protocol with EPR pairs, BBM92, in Box \ref{box:BBM92}. The BBM92 protocol can be reduced to the BB84 protocol if Alice measures her half of the state in the $Z$ or $X$ basis. Based on her measurement result, Alice equivalently sends a qubit in $\ket{0}$, $\ket{1}$, $\ket{+}$, or $\ket{-}$ to Bob.

\begin{mybox}[label={box:BBM92}]{{BBM92 protocol \cite{bennett1992quantum}, an entanglement version of BB84.}}
	\begin{enumerate}[(1)]
		\item
		\emph{State preparation: }Alice prepares EPR pairs, $\ket{\Phi^+}=(\ket{00}+\ket{11})/\sqrt2$. She stores one half of the pairs locally and sends the other half to Bob.
		\item
		\emph{State storage: }Upon receiving a qubit, Bob stores the state in quantum memories. If the qubit has been lost in the channel or the quantum storage fails, they discard the pair.
		\item
		\emph{Quantum error correction: }Alice and Bob measure a random sample of the stored qubit pairs to estimate the quantum bit- and phase-error rates, $e_b$ and $e_p$. They apply quantum error correction to the remaining stored qubit pairs. They share $n$ (almost) perfect EPR pairs.
		\item
		\emph{Key measurement: }Both Alice and Bob measure the EPR pairs in the local $Z$ basis to obtain the final key.
	\end{enumerate}
\end{mybox}


For the key-distribution task, Alice and Bob need to make sure that their key bit strings are \emph{identical} and \emph{uniformly random} from Eve's point of view.
To satisfy these two requirements, the ideal key state shared by Alice and Bob, and any outside adversary Eve, is defined to be the following classical-classical-quantum state:
\begin{equation} \label{eq:idealqkey}
	\rho_{ideal} = 2^{-n} \sum_{k\in\{0,1\}^n} \ket{k,k}_{A,B}\bra{k,k} \otimes \rho_E,
\end{equation}
where systems $A,B$ are keys held by Alice and Bob and system $E$ is held by Eve. In the ideal key state, Alice's and Bob's key bit strings are identical. Eve's system $\rho_E$ is independent of the key $k$, which brings her no more information about the key string than a random guess. This definition follows the works of Ben-Or \emph{et al.} \cite{ben2005universal} and Renner and K\"{o}nig \cite{Renner2005Security}.

In practice, however, Alice and Bob cannot generate an ideal key. It is reasonable to allow for a small failure probability. That is, Alice and Bob can generate a key state that is very close to an ideal one. To put the idea into a rigorous form, if the realistic key state
\begin{equation} \label{eq:actualkey}
	\rho_{key} = \sum_{k_A,k_B\in\{0,1\}^n}\Pr(k_A,k_B) \ket{k_A}_{A}\bra{k_A}\otimes\ket{k_B}_{B}\bra{k_B} \otimes \rho_E(k_A,k_B),
\end{equation}
satisfies
\begin{equation} \label{eq:tracedis}
	\dfrac{1}{2}	\min_{\rho_E}\|\rho_{key}-\rho_{ideal}\|_1\leq\varepsilon,
\end{equation}
the protocol will be $\varepsilon$ secure. Here, the distance measure is the trace distance. For two density matrices $\rho,\sigma$, the measure is defined as

\begin{equation} \label{eq:deftracedis}
	\begin{aligned}
		\dfrac{1}{2} \|\rho - \sigma\|_1 = \frac12\sum_i|\lambda_i|,
	\end{aligned}
\end{equation}
where $\lambda_i$ are eigenvalues of the operator $\rho-\sigma$.
The choice of the trace-distance measure is that it satisfies the requirements of a composable-security framework~\cite{ben2005universal,Renner2005Security}.

\begin{definition}[QKD $\varepsilon$ security]\label{def:securekey}
	A QKD protocol is $\varepsilon$ secure if the generated state $\rho_{key}$ given in Eq.~\eqref{eq:actualkey} is $\varepsilon$ close to the ideal key state $\rho_{ideal}$ given in Eq.~\eqref{eq:idealqkey} with respect to the trace-distance definition.
\end{definition}

This definition, usually referred to as \emph{soundness}, guarantees that once the protocol is not aborted, the generated key is private with a high probability. In the composable-security framework, there is another security parameter, \emph{completeness}--i.e., the protocol success probability--which guarantees that one protocol is not trivial and that it does not always abort. For most protocols, completeness can be easily established, so we do not discuss the completeness parameter in this work. To connect the security definition and the EPR pairs, we employ the following lemma.

\begin{lemma}[Lemma 1 in~\cite{Fung2010Finite}]\label{lemma:epsilonclose}
	If Alice and Bob share a quantum state $\rho_{AB}$ that is $\varepsilon_f$ close to the ideal key state before projective measurements onto $\ket{\Phi^+}^{\otimes n}$, i.e.
	\begin{equation} \label{eq:rhoABf}
		\begin{aligned}
			\bra{\Phi^+}^{\otimes n} \rho_{AB} \ket{\Phi^+}^{\otimes n} \ge 1-\varepsilon_f,
		\end{aligned}
	\end{equation}
	then the QKD protocol is $\varepsilon$ secure with $\varepsilon=\sqrt{\varepsilon_f (2 - \varepsilon_f)}$.
\end{lemma}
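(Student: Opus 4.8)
The plan is to recast $\varepsilon$-security as a trace-distance bound between the joint pre-measurement state of Alice, Bob, and Eve and a suitable \emph{product} target, and to obtain that bound from the fidelity hypothesis \eqref{eq:rhoABf} using Uhlmann's theorem, the Fuchs--van de Graaf inequality, and monotonicity of the trace distance under channels. As is standard, I would give Eve a purification $\ket{\psi}_{ABE}$ of $\rho_{AB}$; this is the worst case, since for any other extension her marginal is a partial trace of this one and the trace distance can only decrease. Letting $\mathcal{M}$ denote the channel that measures Alice's and Bob's $n$ qubits in the $Z$ basis and records the outcomes (the key-generation step), the realistic key state of Eq.~\eqref{eq:actualkey} is $\rho_{key}=(\mathcal{M}\otimes\mathrm{id}_E)(\ket{\psi}\bra{\psi})$. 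The first observation is that the hypothesis is a fidelity statement: $\bra{\Phi^+}^{\otimes n}\rho_{AB}\ket{\Phi^+}^{\otimes n}=F(\rho_{AB},\ket{\Phi^+}^{\otimes n})^2$, so \eqref{eq:rhoABf} gives $F(\rho_{AB},\ket{\Phi^+}^{\otimes n})\ge\sqrt{1-\varepsilon_f}\ge 1-\varepsilon_f$.

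Next I would lift this to the level of $\ket{\psi}_{ABE}$. Since $\ket{\Phi^+}^{\otimes n}$ is pure on $AB$, any purification of it on $ABE$ has the product form $\ket{\Phi^+}^{\otimes n}_{AB}\otimes\ket{\chi}_E$, and Uhlmann's theorem supplies a $\ket{\chi}_E$ with $\bigl|\langle\psi|\bigl(\ket{\Phi^+}^{\otimes n}\otimes\ket{\chi}\bigr)\bigr|=F(\rho_{AB},\ket{\Phi^+}^{\otimes n})\ge 1-\varepsilon_f$ (concretely one may take $\ket{\chi}_E\propto(\bra{\Phi^+}^{\otimes n}\otimes\eins_E)\ket{\psi}$). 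Using the pure-state identity $\tfrac12\|\ket{a}\bra{a}-\ket{b}\bra{b}\|_1=\sqrt{1-|\langle a|b\rangle|^2}$, this yields $\tfrac12\bigl\|\ket{\psi}\bra{\psi}-\ket{\Phi^+}^{\otimes n}\bra{\Phi^+}^{\otimes n}\otimes\ket{\chi}\bra{\chi}\bigr\|_1\le\sqrt{1-(1-\varepsilon_f)^2}=\sqrt{\varepsilon_f(2-\varepsilon_f)}$. Applying $\mathcal{M}\otimes\mathrm{id}_E$ cannot increase the trace distance; it maps $\ket{\psi}\bra{\psi}$ to $\rho_{key}$, and, since $\ket{\Phi^+}^{\otimes n}=2^{-n/2}\sum_k\ket{k}_A\ket{k}_B$ so that the $Z$-basis measurement turns $\ket{\Phi^+}^{\otimes n}\bra{\Phi^+}^{\otimes n}$ into $2^{-n}\sum_k\ket{k,k}\bra{k,k}$ while leaving the $E$ register untouched, it maps the product target exactly onto $\rho_{ideal}$ of Eq.~\eqref{eq:idealqkey} with $\rho_E=\ket{\chi}\bra{\chi}$. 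Hence $\tfrac12\min_{\rho_E}\|\rho_{key}-\rho_{ideal}\|_1\le\sqrt{\varepsilon_f(2-\varepsilon_f)}$, which by Definition~\ref{def:securekey} is precisely $\varepsilon$-security with $\varepsilon=\sqrt{\varepsilon_f(2-\varepsilon_f)}$.

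The step that does the real work is the passage to the purification. It would \emph{not} suffice to bound $\tfrac12\|\rho_{AB}-\ket{\Phi^+}^{\otimes n}\bra{\Phi^+}^{\otimes n}\|_1$ at the $AB$ level, because a $\rho_{AB}$ close to $\ket{\Phi^+}^{\otimes n}\bra{\Phi^+}^{\otimes n}$ could still be strongly correlated with Eve, whereas $\rho_{ideal}$ is a product between $AB$ and $E$; Uhlmann's theorem is exactly what converts near-purity of $\rho_{AB}$ into near-factorization of $\ket{\psi}\bra{\psi}$, which is what lets the measured target coincide with $\rho_{ideal}$ and realizes the minimization over $\rho_E$. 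The remaining ingredients---identifying $\bra{\Phi^+}^{\otimes n}\rho_{AB}\ket{\Phi^+}^{\otimes n}$ with a squared fidelity, the pure-state trace-distance/overlap identity, the data-processing inequality for the trace distance, and evaluating the $Z$-basis measurement on $\ket{\Phi^+}^{\otimes n}$---are routine. One also sees here why the stated constant is $\sqrt{\varepsilon_f(2-\varepsilon_f)}$ and not the marginally sharper $\sqrt{\varepsilon_f}$: the slack comes solely from relaxing $\sqrt{1-\varepsilon_f}$ to $1-\varepsilon_f$ before invoking Fuchs--van de Graaf.
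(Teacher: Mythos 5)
The paper does not prove this lemma---it is imported verbatim as Lemma~1 of \cite{Fung2010Finite}---so there is no internal proof to compare against. Your argument is correct and is essentially the standard derivation behind the cited result: every step checks out, namely the worst-case purification for Eve, the Uhlmann step producing a product purification $\ket{\Phi^+}^{\otimes n}\otimes\ket{\chi}$ that attains the fidelity (valid here precisely because the target on $AB$ is pure), the pure-state overlap/trace-distance identity, contractivity under the $Z^{\otimes n}$ measurement channel, and the identification of the measured product state with $\rho_{ideal}$ so that the minimization over $\rho_E$ in Definition~\ref{def:securekey} is realized. Your closing observation is also accurate: retaining $\left|\bra{\psi}\left(\ket{\Phi^+}^{\otimes n}\otimes\ket{\chi}\right)\right|^2\ge 1-\varepsilon_f$ instead of relaxing the overlap to $1-\varepsilon_f$ would give the sharper bound $\varepsilon=\sqrt{\varepsilon_f}$, and the stated constant $\sqrt{\varepsilon_f(2-\varepsilon_f)}$ simply reflects the convention in \cite{Fung2010Finite} of treating the squared overlap itself as the fidelity before applying the Fuchs--van de Graaf inequality.
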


Combining Lemma~\ref{lemma:epsilonclose} and the security definition, we can conclude that if the state $\rho_{AB}$ shared by Alice and Bob before the measurement of a QKD protocol satisfies Eq.~\eqref{eq:rhoABf} with a small $\varepsilon_f$, this protocol is $\varepsilon$ secure. Therefore, the security of QKD is closely related to entanglement and the purpose of the security proof is to realize Eq.~\eqref{eq:rhoABf}.

\subsection{Security analysis based on quantum error correction}
In establishing the security analysis of QKD, important tools such as entanglement distillation \cite{Bennett96Mixed} and quantum error correction have been proposed and developed. A profound discovery is the connection between key privacy and entanglement. The number of generated keys can be elegantly linked with distillable entanglement under local operations and classical communication \cite{lo1999unconditional}. One can distill almost perfect entanglement via quantum bit- and phase-error correction before measuring the quantum state to obtain the final key. Though security can be proven via entanglement distillation, one does not need to carry out this procedure in reality. Quantum bit- and phase-error correction can be carried out using classical means once the parameters are well estimated in the virtual quantum scenario~\cite{shor2000simple}. In this framework, quantum bit-error correction corresponds to information reconciliation and quantum phase-error correction is transformed into privacy amplification. Below, we briefly introduce the security analysis based on quantum error correction \cite{lo1999unconditional,shor2000simple}.
In the following discussions, we call a state transmission and measurement that generates a pair of raw key bits a QKD \emph{round}. After many rounds, Alice and Bob accumulate enough raw key bits as a \emph{block} for postprocessing, which we call a QKD \emph{session}.

Alice and Bob aim to establish a perfect Einstein-Podolsky-Rosen (EPR) pair $\ket{\Phi^+}=(\ket{00}+\ket{11})/\sqrt2$ for each round, where $\ket{0}$ and $\ket{1}$ are the eigenstates of the Pauli operator $\sigma_z$. Due to channel disturbance or Eve's interference, the EPR pairs shared by Alice and Bob after $n$ rounds of state transmission are usually imperfect. We denote the state of these data pairs as $\rho_{AB}$. The difference between $\rho_{AB}$ and $\ketbra{\Phi^+}^{\otimes n}$ can be seen as disturbance and can be characterized by bit-error rate $e_b$ and the phase-error rate $e_p$:
\begin{equation} \label{eq:BPerr}
\begin{aligned}
e_b &= \frac12(1-\Tr[\rho_{AB}(\sigma_z\otimes\sigma_z)^{\otimes n}]),\\
e_p &= \frac12(1-\Tr[\rho_{AB}(\sigma_x\otimes\sigma_x)^{\otimes n}]),
\end{aligned}
\end{equation}
where $\sigma_x$ is another Pauli operation and we take the $Z$ basis for key generation. The values of $e_b$ and $e_p$ can be obtained by parameter estimation.
Here, the bit- and phase-error rates defined in Eq.~\eqref{eq:BPerr} are consistent with those in Eq.~\eqref{eq:biterror}. Note that Eq.~\eqref{eq:biterror} defines error frequencies, while Eq.~\eqref{eq:BPerr} defines error probabilities. Strictly speaking, these two definitions are only the same in the infinite-data-size limit, $n\rightarrow\infty$. In the finite-data-size regime, there is a deviation between them, caused by statistical fluctuations. For simplicity, we ignore the difference for the moment and we take it into account in the evaluation of the failure probability of quantum error correction.

If $e_b=e_p=0$, this means that for the pair of qubits $\rho$ in each round,
\begin{eqnarray}
	\Tr[\rho(\sigma_z\otimes\sigma_z)]&=&1,\label{eq:aimofQECbit} \\
	\Tr[\rho(\sigma_x\otimes\sigma_x)]&=&1,\label{eq:aimofQECphase}
\end{eqnarray}
which indicates that $\rho=\ketbra{\Phi^+}$. To realize Eqs.~\eqref{eq:aimofQECbit} and \eqref{eq:aimofQECphase}, Alice and Bob can apply the quantum circuit shown in Figure \ref{fig:QECprocedure} to do the quantum bit- and phase-error correction. Quantum bit-error correction guarantees Eq.~\eqref{eq:aimofQECbit} and quantum phase-error correction guarantees Eq.~\eqref{eq:aimofQECphase}.

\begin{figure}[hbt!]
\centering \includegraphics[width=16cm]{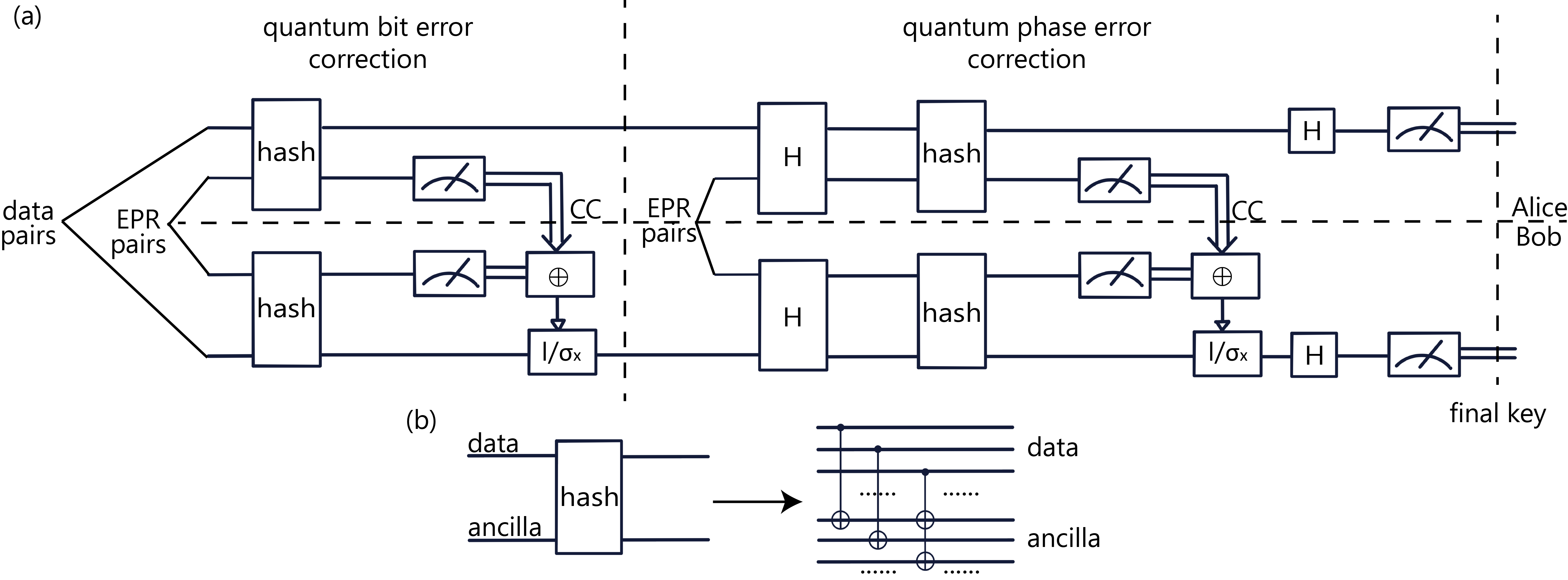}
\caption{(a) Quantum bit- and phase-error correction.
The measurements in all the figures are $Z$-basis measurements by default. ``CC" is short for classical communication. The $\oplus$ operation means XOR operations on classical bit strings. $H$ represents the Hadamard gate applied to each of the involved qubits. $I/\sigma_x$ represents an identity or $\sigma_x$ operation on the qubits, depending on the error syndrome. (b) In the linear case, the hash functions can be represented by matrices and realized by a series of controlled-NOT (CNOT) gates between the data (as control) and ancillary (as target) qubits. The measurement outcomes of ancillary qubits would give the parity information of the data qubits.}\label{fig:QECprocedure}
\end{figure}

The cost of quantum error correction comes from the ancillary EPR pairs used during the procedure. As an analog to the cost in classical error correction, the number of ancillary EPR pairs equals the number of parity check bits. Then, bit- and phase-error correction will cost $t_b = nh(e_b)$ and $t_p = nh(e_p)$ ancillary EPR pairs, respectively, where $h(x)=-x\log(x)-(1-x)\log(1-x)$ is the binary entropy function. Throughout this work, all logarithms are base 2. These costs can be derived from the cost in classical error correction, the details of which we leave to Appendix \ref{app:EC}. Therefore, the net generation rate of EPR pairs is given by \cite{shor2000simple},
\begin{equation} \label{eq:ShorPreskillR}
	\begin{aligned}
		r = 1-h(e_b)-h(e_p).
	\end{aligned}
\end{equation}

We can further reduce the procedure to a prepare-and-measure one by moving the final measurement to the front of quantum error correction. The reduction of quantum bit-error correction is straightforward, since all the operations can be done equivalently on classical data bits and it finally becomes information reconciliation. In contrast, the final measurement cannot be moved ahead of the quantum phase-error correction directly. Alice and Bob can replace the final measurements with a properly chosen joint measurement, which can be moved ahead of quantum phase-error correction \cite{shor2000simple}. Then the joint measurement becomes a classical operation called privacy amplification. Finally, quantum error correction is reduced to a ``measurement + postprocessing'' procedure. The reduction is in the spirit of the Shor-Preskill security proof \cite{shor2000simple}. We leave the details to Appendix \ref{app:QECwithhash} and \ref{app:reduction}.

Due to the joint measurement, each final key bit depends on the measurement results from all the $n$ data qubits. Hence, for privacy amplification, Alice and Bob need to wait for all the quantum states to be transmitted and measured in a QKD session. We call this \emph{block} privacy amplification.


\section{Stream privacy amplification}\label{Sc:newPA}
\subsection{Reduction of quantum error correction}
In the aforementioned reduction, Alice and Bob cannot move the final key measurement ahead of phase-error correction directly. The main obstacle is that the Hadamard gate does not commute with the dephasing operation caused by the final key measurement,
\begin{equation}\label{eq:depahsingmeasure}
	\Delta_{Z^{\otimes n}}[\rho_{A}]=\sum_{\vec{k}\in \{0,1\}^n} \ketbra{\vec{k}} \rho_A \ketbra{\vec{k}}, 
\end{equation}
where $\rho_{A}$ denotes the state that Alice holds and the dephasing operation is defined with respect to the key-measurement basis $Z^{\otimes n}$ with outcomes $\vec{k}$. The operation on Bob's side is similar. In the Shor-Preskill reduction, Alice and Bob essentially construct a joint $Z$-basis measurement that commutes with hash operations to circumvent this problem. As a result, this makes privacy amplification operate in blocks. Here, we rearrange the reduction of the phase-error-correcting gates and keep the individual $Z$-basis measurements in the quantum phase-error correction. Consequently, we can render stream privacy amplification.

The key idea of the new reduction is to cancel all the Hadamard gates in quantum phase-error correction, shown in Figure \ref{fig:QECprocedure}. The controlled-NOT (CNOT) gate in the circuit always appears in pairs on Alice's and Bob's sides. We focus on one pair of CNOT gates in the quantum phase-error-correction part, as depicted in Figure \ref{fig:QPECtonewPA}(a). First, noting that $H^2=I$, we add two consecutive Hadamard gates after each output qubit of the CNOT gate. Then, the four Hadamard gates before and after each CNOT gate exchange the roles of the control and target qubits, $H^{\otimes 2}C_{\alpha\beta}H^{\otimes 2}= C_{\beta\alpha}$, where $C_{\alpha\beta}$ denotes a CNOT gate with control qubit $\alpha$ and target qubit $\beta$ and $C_{\beta\alpha}$ is the other way around. For Bob's data qubit, the phase-error-correcting operator $I/\sigma_x$ becomes $I/\sigma_z$ since $\sigma_z=H\sigma_xH$. Hence, we prove that circuits (a) and (b) in Figure \ref{fig:QPECtonewPA} are equivalent. Since the new phase-error-correcting operator $I/\sigma_z$ does not affect the $Z$-basis measurement, this step can be skipped along with the error-syndrome measurements on the ancillary qubits. The remaining operations commute with the dephasing operation, $\Delta_{Z^{\otimes n}}$. Alice and Bob can add $Z$-basis measurements on ancillary qubits after the CNOT gates, since they are irrelevant at that point. Finally, they can move the final measurement ahead of quantum error correction, as shown in Figure \ref{fig:QPECtonewPA}(c).


\begin{figure}[hbt!]
\centering \includegraphics[width=14cm]{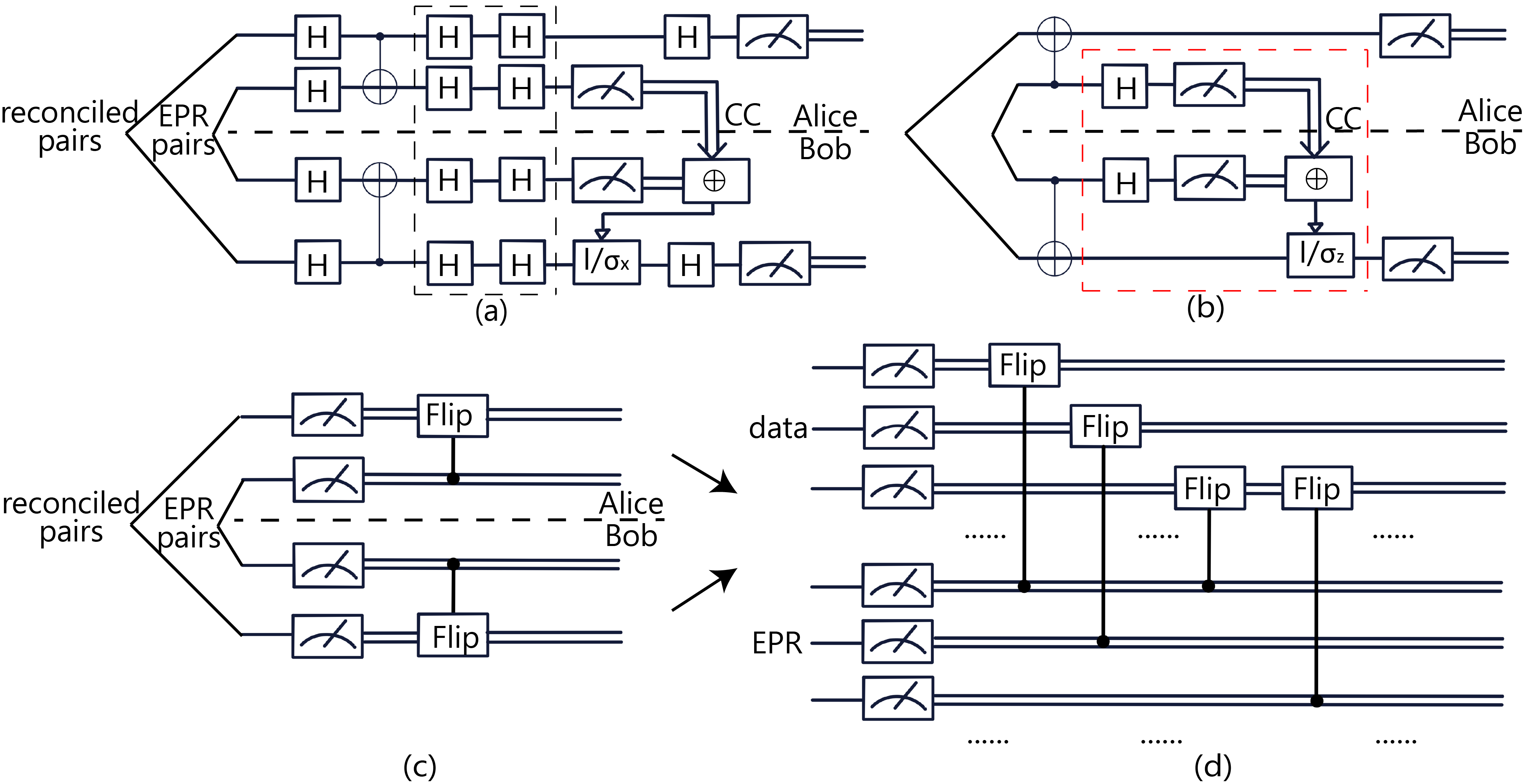}
\caption{Circuit (a) is derived from the quantum phase-error-correction phase in Figure \ref{fig:QECprocedure} by adding Hadamard gates in dashed boxes that form identity operations. We take one pair of CONT operations for illustration. Circuit (b) is equivalent to circuit (a) by considering the following facts: $H^{\otimes 2}C_{\alpha\beta}H^{\otimes 2}= C_{\beta\alpha}$, $H^2=I$, and $H\sigma_xH=\sigma_z$. Since neither the identity nor the $\sigma_z$ gate affects the $Z$-basis measurement, the operations in the dashed box of circuit (b) are redundant and can be removed. Then by moving the $Z$-basis measurement on ancillary qubits ahead of the hash operation and changing quantum-control flips to classical-control flips, circuit (b) turns into circuit (c), a ``measurement + postprocessing'' case. Circuit (d) shows the case of multiple CNOT pairs, taking the hashing circuit in Figure \ref{fig:QECprocedure} (b) as an example. In the end, both Alice and Bob employ circuit (d) to obtain final key strings.}
\label{fig:QPECtonewPA}
\end{figure}

So far, we have only considered one CNOT gate. The hash operation in phase-error correction shown in Figure \ref{fig:QECprocedure} is composed of many CNOT gates. This reduction also works for the general hash-operation case. With this argument, by inserting consecutive Hadamard gates $H^2=I$ after each CNOT gate of the phase-error-correction part in Figure \ref{fig:QECprocedure}, we can reduce the whole quantum error-correction circuit to the ``measurement + postprocessing'' case, as shown in Figure \ref{fig:QPECtonewPA}(d).

With the new reduction, the final key is determined by single-qubit measurements plus bit flips. The $Z$-basis measurement on the ancillary EPR pairs will provide Alice and Bob with a secure key \emph{seed}. The bit flips are controlled by the seed and the hashing matrix. Then, the $i$th final key bit, extracted from the $i$th data qubit, is independent of the other data qubits. Hence, the new procedure can output the final key in a stream, i.e., the users can obtain a secure key bit once a pair of raw key bits is reconciled successfully between Alice and Bob. Following the term ``stream cipher'' in classical cryptography, we call this \emph{stream} privacy amplification, as presented in Box \ref{box:streamPA}. The hashing matrix $M$ in step (1) is the transpose of the original hashing matrix used in the quantum phase-error-correction phase of Figure \ref{fig:QECprocedure}, because the original hashing matrix acts on $X$ basis while $M$ acts on the $Z$ basis. The cost of stream privacy amplification lies in step (2), where $nh(e_p)$ preshared secure bits are consumed. The final key rate matches Eq.~\eqref{eq:ShorPreskillR}. Note that we consider the infinite-data-size limit here. The finite data-size effects due to statistical fluctuations have been well considered in the literature and are shown in Appendix \ref{app:EC}.

\begin{mybox}[label={box:streamPA}]{{Stream privacy amplification}}
After information reconciliation, denote Alice's and Bob's reconciled key as $\vec{a}\in\{0,1\}^{n}$. 
\begin{enumerate}
\item
Alice and Bob randomly choose a hashing matrix $M$ of size $nh(e_p)\times n$. 
\item
Alice and Bob use an $nh(e_p)$-bit seed, $\vec{d}\in\{0,1\}^{nh(e_p)}$, to generate a pseudorandom string, $\vec{d} \cdot M$, where the dot product between the row vector and the matrix needs to take modulo 2 addition.
\item
The final key is given by $\vec{k}=\vec{d}\cdot M \oplus \vec{a}$.
\end{enumerate}
\end{mybox}

With the above deduction, the failure probability of privacy amplification $\varepsilon$ is given by that of quantum phase-error correction. Given a phase-error pattern from a typical set determined by parameter estimation, a randomly chosen hash function can identify it with a high probability $1-\varepsilon$. This is the reason why Alice and Bob need a random hashing matrix in step (1) of Box \ref{box:streamPA}. Details of error correction and its failure probability are presented in Appendix \ref{app:EC}.

Note that the phase-error pattern must be set before choosing the random hash function. That is, Eve cannot know the hashing matrix before Alice and Bob obtain the raw key bits. Naively, Alice can randomly pick up a matrix and send it to Bob via an authenticated but nonencrypted channel, as in the conventional case in block privacy amplification. Since this public transmission of the hashing matrix must be done after quantum measurement, Alice and Bob have to wait for the whole block to be transmitted and measured. Then, they lose the ``stream" property in privacy amplification.

To solve this problem, we apply a different approach, in which Alice and Bob generate an identical random hashing matrix locally with a preshared key and never reveal it in public. Then, they can prepare this matrix [step (1)] and the pseudorandom string [step (2)] before quantum transmission. A naive implementation of this approach, in which Alice and Bob generate $M$ and $\vec{d}$ in each run of the privacy amplification, could consume too many preshared secure bits, as for most of the universal hashing matrices, the number of random bits required to generate the matrix is larger than the data size $n$. Fortunately, with the following theorem, Alice and Bob can reuse the private hashing matrix in multiple QKD sessions with a failure probability that increases linearly, satisfying the composable-security definition~\cite{ben2005universal,Renner2005Security}. Since the failure probability can be exponentially small, the same hashing matrix can be used for many QKD sessions. Therefore, the cost of generating this hashing matrix is shared with these sessions, making the average cost negligible.

\begin{theorem}[Reuse of hashing matrix in privacy amplification]\label{thm:reuse}
Given a QKD session, the failure probability of a randomly chosen hashing matrix for privacy amplification is upper bounded by $\varepsilon$. Then, for $m$ QKD sessions, if Alice and Bob apply the same randomly chosen matrix for each session, the probability that privacy amplification fails in at least one session is upper bounded by $m\varepsilon$.
\end{theorem}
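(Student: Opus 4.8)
The plan is to reduce the statement to a union bound over the $m$ sessions; the only genuine work is to check that reusing a single matrix does not inflate the per-session failure probability, after which the bound $m\varepsilon$ is immediate.

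First I would fix notation. For $j=1,\dots,m$, let $F_j$ denote the event that privacy amplification --- equivalently, quantum phase-error correction in the virtual entanglement-distillation picture --- fails in session $j$. Each $F_j$ is a function of the single random hashing matrix $M$ (common to all sessions), together with the measurement outcomes, Eve's interaction, and the parameter-estimation data of session $j$. The quantity to be bounded is $\Pr\!\left[\bigcup_{j=1}^m F_j\right]$, with the probability taken over the choice of $M$ and over all quantum measurement randomness, against a worst-case Eve. The hypothesis supplies $\Pr[F_j]\le\varepsilon$ for a fresh random matrix used in an isolated session; the task is to transport this to the reuse setting and then combine over $j$.

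The key step is to argue that $\Pr[F_j]\le\varepsilon$ still holds for each individual $j$ when the matrix is reused. In the scheme of Box~\ref{box:streamPA}, $M$ is generated locally from a preshared key and is never sent over the public channel; it enters only the computation of $\vec d\cdot M$, which is XORed into the reconciled key and never exposed. Hence Eve's side information is independent of $M$ at every point of every session, so for any attack --- adaptive, and possibly entangling probes across different sessions --- the joint distribution of session $j$'s phase-error pattern $\vec e_j$ together with all data of the other sessions is independent of $M$. Conditioning on everything except $M$, the event $F_j$ reduces exactly to the single-session event analyzed in Appendix~\ref{app:EC}: for a phase-error pattern in the typical set a uniformly random $M$ fails to identify it with probability at most $\varepsilon$ (with the probability of an atypical pattern already folded into $\varepsilon$), and the marginal law of $M$ is still the uniform one. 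Averaging over the conditioning gives $\Pr[F_j]\le\varepsilon$, unchanged by reuse.

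Finally I would invoke the union bound $\Pr\!\left[\bigcup_{j=1}^m F_j\right]\le\sum_{j=1}^m\Pr[F_j]\le m\varepsilon$, which holds regardless of the strong correlations the common $M$ induces among the $F_j$. Outside this probability-$m\varepsilon$ event every session passes phase-error correction, so by Lemma~\ref{lemma:epsilonclose} and Definition~\ref{def:securekey} each session delivers a secure key, which is the assertion. The main obstacle is precisely the per-session claim: one must be certain that no information about $M$ leaks across sessions, since otherwise Eve could steer a later session's error pattern onto a syndrome collision of $M$ and break the per-session bound --- this is exactly why the protocol must never publish $M$, in contrast to conventional block privacy amplification, and it is also consistent with the requirement that the error pattern be ``set'' before (the effect of) the hash is seen. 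A secondary bookkeeping point is the split of $\varepsilon$ into the atypical-pattern contribution and the hash-collision contribution, but both are already controlled in the single-session analysis and neither is affected by reuse.
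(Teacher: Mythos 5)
Your proof is correct and follows essentially the same route as the paper's: reduce privacy-amplification failure to the failure of phase-error correction on the $m$ error patterns, observe that the per-session bound survives reuse because $M$ is never published (so Eve's error patterns are independent of $M$, a point the paper makes in the remark following Lemma~\ref{lemma:npatterns}), and conclude by the union bound exactly as in Lemma~\ref{lemma:npatterns} of Appendix~\ref{App:Reuse}.
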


\begin{proof}
Following the aforementioned deduction of quantum error correction, the failure probability of privacy amplification is determined by phase-error correction. Now, the question becomes that given $m$ phase-error patterns, if Alice and Bob randomly pick a hash function to correct all the errors, what is the failure probability that at least one phase-error pattern is unsuccessfully corrected? In Appendix \ref{App:Reuse}, we provide the failure probability of reusing hash functions for error correction, as given by Lemma \ref{lemma:npatterns}. Using this result, the answer to the above question is $m\varepsilon$.
\end{proof}

Note that this conclusion is consistent with lemma III.1 in \cite{Ma2013Extractor}, which can obtain similar findings in randomness extraction. Theorem 1 in \cite{Renner2005Security} also implies the reuse of seeds in QKD schemes.

Before running QKD sessions, Alice and Bob can perform steps (1) and (2) in Box \ref{box:streamPA} and prepare the pseudorandom string in advance. They only need to run step (3) in privacy amplification during real-time QKD, which is essentially composed of simple XOR operations and much faster than hash operations. In block privacy amplification, the computational complexity of the matrix multiplication with Toeplitz hashing is $O(n\log n)$ with the fast-Fourier-transform algorithm, where $n$ is the length of the reconciled key string~\cite{ma2011explicit,hayashi2016more}. In contrast, the computational complexity of step (3) is $n$ and hence stream privacy amplification is faster in real-time QKD, especially when the data size is large. 

In reality, Alice and Bob need parameter estimation before privacy amplification. This might restrict the ``stream" feature since the parameters cannot be accurately estimated until the whole data block is accumulated. Nevertheless, when the link between Alice and Bob is stable, they can foresee the parameters and apply them to stream privacy amplification. In practice, it is not difficult to maintain stability in a quantum communication
network~\cite{chen2021integrated}. In addition, the users can double check the parameters after the transmission of the whole block. If the predicted parameters are within a reasonable range, they keep the key. Otherwise, if the actual parameters show that the implemented privacy amplification cannot guarantee security, this implies that the length of the seed chosen in step (2) of Box \ref{box:streamPA} is inefficient. Alice and Bob can then choose another seed, according to the difference between the predicted and actual parameters, to carry out additional privacy amplification on the key to make it secure.

Note that the stream scheme can work for any block size in QKD implementations. In order to make privacy amplification efficient, Alice and Bob can employ a large data size without causing delays in real-time key generation. 
Compared with the previous ones, the unique feature of the new scheme---stream output---can make QKD more practical in scenarios such as the satellite-to-ground link \cite{liao2018satellite}.

Moreover, the bit-error locations in the input string will remain the same after stream privacy amplification, since the final key bit is only determined by the pseudorandom bit and the raw key bit at the same location. As a result, the errors will not spread out, and then privacy amplification can even be performed ahead of information reconciliation. This feature increases the flexibility of data postprocessing. For example, privacy amplification and information reconciliation can be performed in parallel. The recently proposed scenario of distributed private randomness distillation \cite{yang2019distributed} is also a potential application of the new scheme.

\subsection{Application I: Enhancing the security of trusted-relay QKD network}
The major issue of a trusted-relay QKD network lies in the trustworthiness of the intermediate nodes. There are some attempts to reduce the requirement on trusted relays, one of which is delayed privacy amplification \cite{fung2012quantum,Stacey2015Relay}. In the normal case, all the QKD links between two end users, Alice and Bob, will generate secure keys between neighboring nodes. Then, the intermediate relays swap the keys by announcing the XOR results of two keys generated with two neighbors. In the delayed privacy amplification case, the relays swap the key right after information reconciliation. Then, Alice and Bob perform privacy amplification without the relays. In this case, they can eliminate the relays from the final key generation process. That is, the relays do not obtain the final key directly. Of course, if the relays listen to classical communication to obtain privacy amplification matrices, they can still obtain the whole final key string. Therefore, the delayed privacy amplification scheme only works for \emph{honest but curious} relays.

Now, we can combine stream privacy amplification with delayed privacy amplification to further reduce the trustworthiness of the intermediate relays. After information reconciliation, all relays swap their keys by announcing the XOR results of two neighboring keys. Then, Alice and Bob will share a reconciled key string $\vec{a}$, which is also known to the relays. Note that Alice and Bob perform the steps in Box \ref{box:streamPA} locally. In particular, in step (2), they generate the pseudorandom string $\vec{d} \cdot M$ privately. Hence, the relays cannot know the final key without $\vec{d} \cdot M$. If the relays want to learn the final key, they need to figure out $\vec{d}$ and $M$. The seed $\vec{d}$ is private and changes in every QKD session. The hashing matrix $M$, on the other hand, is reused for many sessions in stream privacy amplification, so the relays might figure it out from final and reconciled key strings in past sessions by methods such as differential cryptanalysis. These analysis methods often consume a lot of computational resources. For an even higher security level with fewer assumptions on the relays, we can add another layer of security based on the computational complexity on the intermediate nodes. In practice, this combined scheme further reduces the requirement of the trustworthiness of the relays and enhances the security of trusted-relay QKD networks.


\subsection{Application II: Information-theoretic toolbox for classical encryption analysis}\label{subsec:newtool}
There is an interesting property of the pseudorandom string $\vec{d} \cdot M$ generated in step (2) of Box \ref{box:streamPA}. On the one hand, parameter estimation provides an upper bound on the information leakage of $nh(e_p)$ bits about the raw key string $\vec{a}$. On the other hand, the security proof guarantees that the information leakage is removed via the simple XOR operation $\vec{k}=\vec{d}\cdot M \oplus \vec{a}$. This implies that $\vec{d} \cdot M$ has at least an $nh(e_p)$-bit uncertainty to Eve. In the security analysis, we do not assume in advance which part of the $nh(e_p)$-bit information on $\vec{a}$ is known by Eve. Then, for any $nh(e_p)$ bits from $\vec{d} \cdot M$, the corresponding $nh(e_p)$-bit substring will be uniformly distributed from Eve's point of view, as shown in Figure \ref{fig:kwise}. This property is called $k$-wise independence in classical cryptography and is an essential requirement of many stream ciphers~\cite{alon1992simple}.

\begin{figure}[hbt!]
\centering \includegraphics[width=9cm]{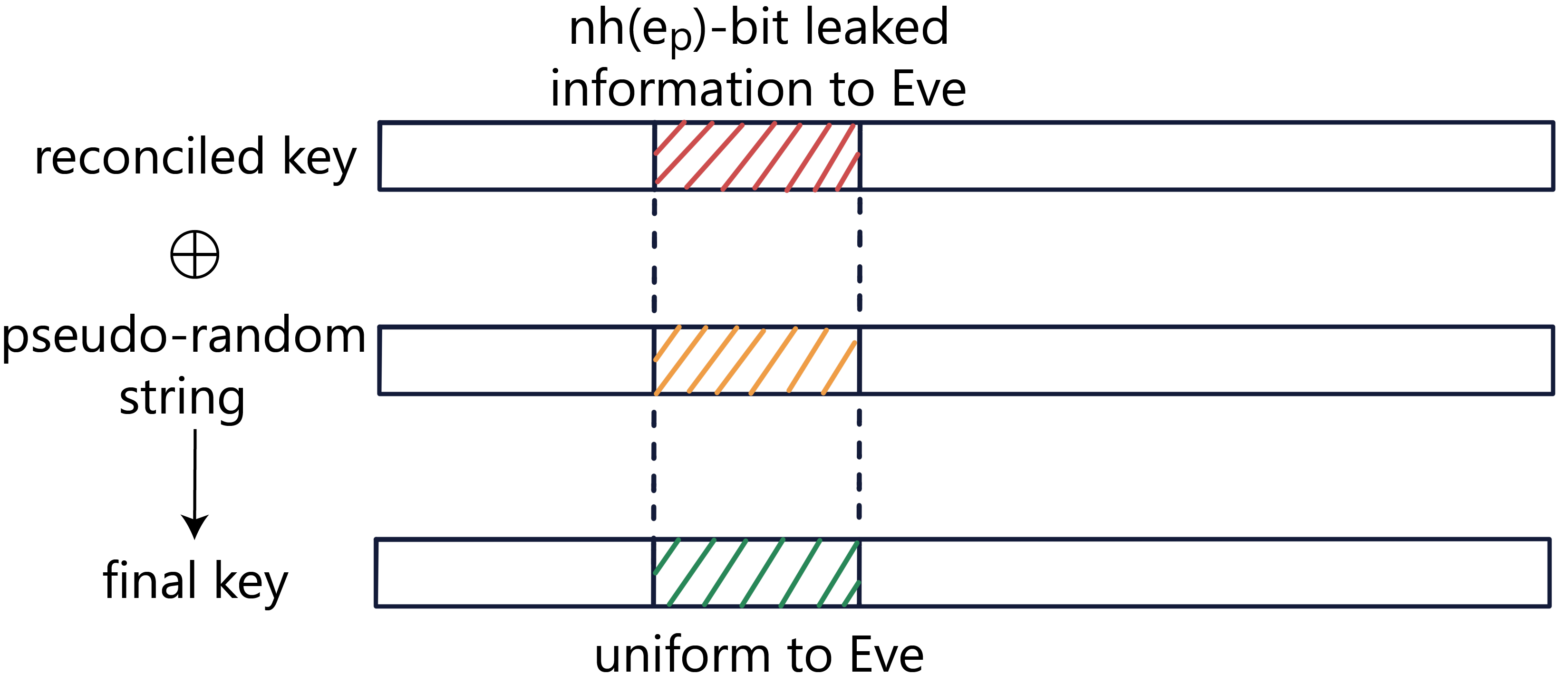}
\caption{An illustration of the pseudorandomness property of $\vec{d} \cdot M$. The red-shaded area in the reconciled key denotes the $nh(e_p)$-bit information leakage to Eve. This leaked information can be the bit values of the key or the parity information about the key. After the XOR of the reconciled key and the pseudorandom string $\vec{d} \cdot M$, the reconciled key becomes the final secure key and is uniformly distributed from Eve's point of view. That is, Eve's knowledge about the key is removed. As a corollary, the corresponding yellow-shaded area in the pseudorandom string should be uniform to Eve as well.}
\label{fig:kwise}
\end{figure}

The above observation inspires a new tool for the security analysis of classical encryption based on quantum phase-error correction. First, we note that the hash function for phase-error correction is not necessarily linear. Alice and Bob can employ an arbitrarily nonlinear code to extend the seed to a pseudorandom string in step (2) of Box \ref{box:streamPA}, as shown in Figure \ref{fig:randomvsEC}(a). This extension operation can be treated as a pseudorandom number generator. Second, we can further generalize this stream privacy amplification as a joint function of the reconciled key and the seed, as shown in Figure \ref{fig:randomvsEC}(b). The joint operation can be treated as a classical encryption box. Alice and Bob can employ sophisticated schemes here, such as the advanced encryption standard (AES) and lattice-based encryption algorithms. Third, we can express the classical operation in the quantum form, as a joint operation $\Lambda$ acting on data and ancillary qubits, as shown in Figure \ref{fig:randomvsEC}(c). Since the operation is classical, $\Lambda$ commutes with the dephasing operation in Eq.~\eqref{eq:depahsingmeasure}: $\Lambda[\Delta_{Z^{\otimes n}}(\rho)]=\Delta_{Z^{\otimes n}}[\Lambda(\rho)]$. By definition, $\Lambda$ is a dephasing incoherent operation (DIO) \cite{streltsov2017colloquium}. At last, we can move $\Lambda$ ahead of measurement. Since phase-error correction is virtual in the security analysis, Alice and Bob can add an extra operation $\Omega$ on the ancillary qubits if necessary, as shown in Figure \ref{fig:randomvsEC}(d).

\begin{figure}[hbt!]
\centering \includegraphics[width=12cm]{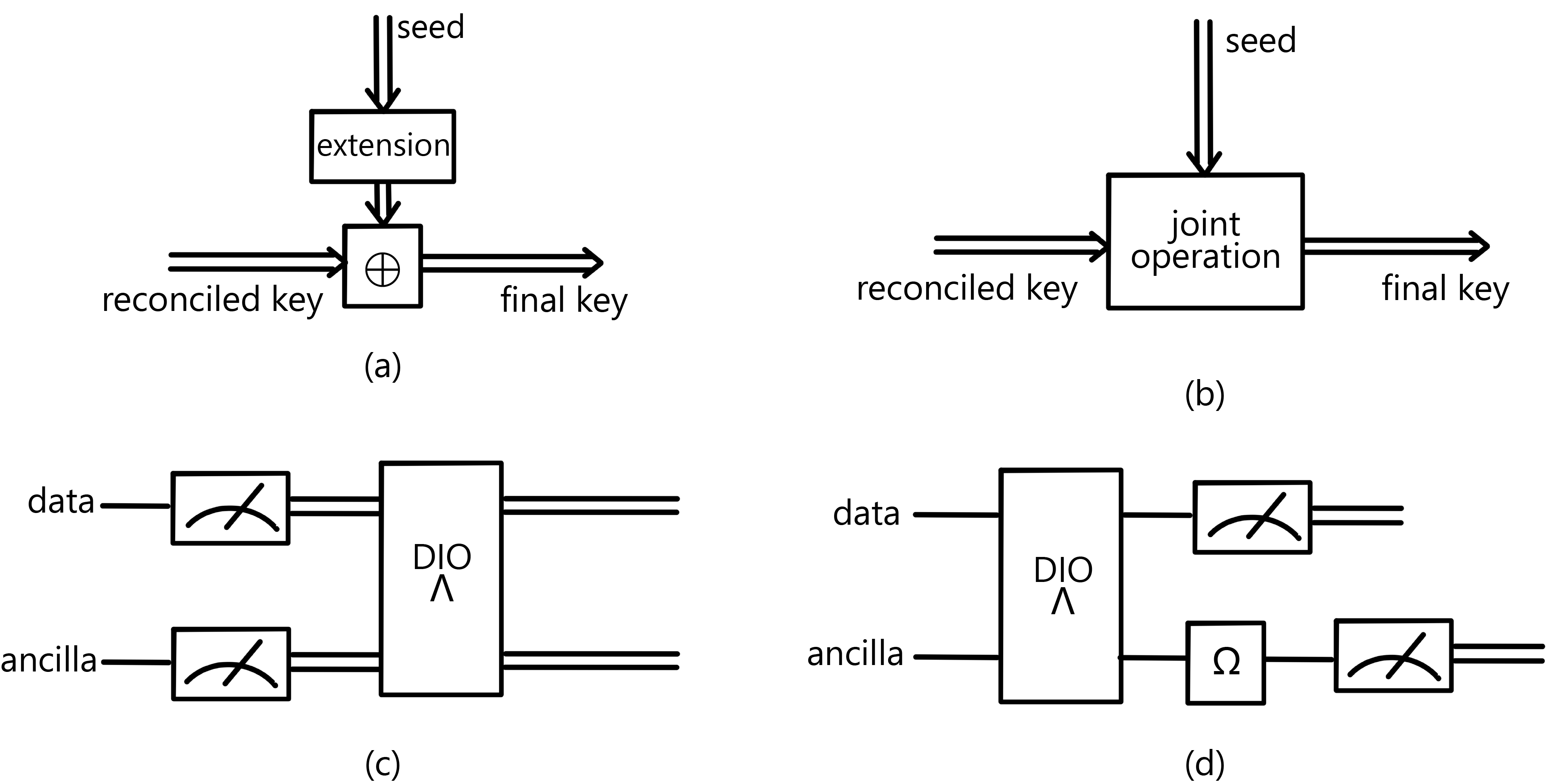}
\caption{(a) A schematic diagram for stream privacy amplification. The process of extending the seed can be treated as a pseudorandom number generator in step (2) of Box \ref{box:streamPA}. (b) The generalization of the extension and XOR operations as a joint operation, which can be seen as a classical encryption box. (c) The quantum form of (b). The classical joint operation becomes a DIO $\Lambda$. (d) Changing the order of $\Lambda$ and measurement. The final measurement on ancillary qubits can be arbitrary with an extra operation $\Omega$.}
\label{fig:randomvsEC}
\end{figure}

Similar to stream privacy amplification, we can analyze the security of the final output of Figure \ref{fig:randomvsEC}(d) by going back to the Lo-Chau security proof based on quantum error correction. The DIO $\Lambda$ is determined by the classical encryption algorithm, which is viewed as the encoding for a quantum error-correcting code. With this code, Alice and Bob exchange the measurement results of ancillary qubits as a phase-error syndrome which should give information about the phase errors of data qubits. According to the security analysis, Alice and Bob do not need to actually correct phase errors since this will not affect the final key. Then, Alice and Bob can add an extra virtual operation $\Omega$ on the ancillary qubits between $\Lambda$ and measurement. They can optimize  $\Omega$ to maximize the error-correction capability led by $\Lambda$. The security of the final output of the classical encryption will be determined by the phase-error-correction capability of the corresponding code. Since we consider the most general attacks in QKD, the method provides a generic information-theoretic analysis tool for classical encryption algorithms.

\subsection{Application III: Stream randomness extraction}
Besides QKD, privacy amplification also plays a vital role in many other quantum cryptographic tasks, such as QRNG. In general, the raw data generated from a practical QRNG system are not uniformly random. Due to device imperfection, some information about the raw data might even be leaked and might lead to potential security loopholes. Thus, user Alice needs to apply randomness extraction to the raw data. By definition, randomness extraction is essentially the same as privacy amplification. As a result, the stream privacy-amplification technique can also be directly applied to QRNG---stream randomness extraction. 

In QRNG, the amount of intrinsic randomness in the raw data is usually quantified in terms of min-entropy~\cite{ma2016quantum}. This randomness measure can be converted to the number of phase errors in a virtual quantum error-correction protocol~\cite{Tsurumaru2022equivalence}. Then, one can convert common block randomness extraction to a stream manner, as presented in Box~\ref{box:streamRE}.

\begin{mybox}[label={box:streamRE}]{{Stream randomness extraction}}
Alice generates a raw bit string from the QRNG device, denoted as $\vec{a}\in\{0,1\}^{n}$, the min-entropy of which is $H_{\min}$. 
\begin{enumerate}
\item
Alice randomly chooses a hashing matrix $M$ of size $(n-H_{\min})\times n$. 
\item
Alice uses an $(n-H_{\min})$-bit seed, $\vec{d}\in\{0,1\}^{n-H_{\min}}$, to generate a pseudorandom string, $\vec{d} \cdot M$, where the dot product between the row vector and the matrix needs to take modulo 2 addition.
\item
The final random bit string is given by $\vec{k}=\vec{d}\cdot M \oplus \vec{a}$.
\end{enumerate}
\end{mybox}	

For the data postprocessing of practical QRNG, stream randomness extraction can be a favored choice. In this context, Alice can fully characterize the quantum devices in use and hence has good empirical knowledge of the randomness generation rate. In other words, the min-entropy of output randomness can be well predicted in advance. With this property, steps (1) and (2) in Box~\ref{box:streamRE} can be done separately in advance without access to the raw data. This enables the main steps of postprocessing to be carried out in parallel with the generation of raw data, which can reduce the storage requirements and modularize quantum random number generators. In addition, the computational complexity of real-time postprocessing [step (3)] is only $n$ and hence helps to solve the current bottleneck of real-time random number generation---slow extraction.

\section{Discussion and Conclusions}\label{Sc:conclusion}
In this work, we propose a stream privacy-amplification scheme, where Alice and Bob locally generate a pseudorandom bit string and XOR it with the reconciled key to obtain the final key. This scheme has a stream output feature and hence can prevent unpleasant delay and error spreading in practice. In contrast to conventional block schemes, stream privacy amplification can be carried out ahead of information reconciliation, which makes the data postprocess more flexible. In addition, stream privacy amplification can enhance the security of a trusted-relay QKD network and improve the practicality of randomness extraction for quantum number generators.

We need to emphasize that although we reduce the stream privacy amplification from the Lo-Chau security proof, the technique is independent of security proofs. Other security-proof methods, such as Koashi's complementarity approach \cite{koashi2009simple}, can also be easily extended to the stream privacy-amplification case. Moreover, the concept is rather generic and can be applied to other QKD schemes, such as six-state, continuous-variable, measurement-device-independent, two-way communication postprocessing, and decoy-state \cite{xu2020secure} schemes. The practical issues, including realistic circumstances, hardware imperfections, and statistic fluctuations, will affect the parameter settings of stream privacy amplification, especially the length of the seed string and the size of the hashing matrix. One can combine with existing analysis methods to deal with these practical issues. In Appendix \ref{app:examplewithGLLP}, we give an example of employing stream privacy amplification in the Gottesman-Lo-L\"utkenhaus-Preskill (GLLP) framework~\cite{gottesman2004security}. The further applications of stream privacy amplification in other quantum cryptographic tasks like quantum oblivious transfer \cite{bennett1992practical,wei2017generic} are also worth studying.

Here, our proof is mainly based on phase-error correction. According to Ref.~\cite{Tsurumaru2022equivalence}, in general, this approach is equivalent to the one based on the quantum leftover hashing lemma~\cite{renner2008security}. An interesting direction is to reconsider the new scheme from the entropic approach point of view.

Our security analysis provides a new perspective to examine classical encryption algorithms information theoretically through quantum-information theories. Rigorous assessment of classical encryption algorithms, such as AES and lattice-based encryption, is often a formidable challenge. To the best of our knowledge, there has been little consideration to date in the context of the information-theoretic study of these encryption algorithms.

\section*{Acknowledgments}
We thank Guang Yang, Guoding Liu, Pei Zeng, and Hongyi Zhou for the helpful discussions.
This work was supported by the National Natural Science Foundation China under Grants No.~11875173 and No.~12174216 and by the National Key Research and Development Program of China under Grants No.~2019QY0702 and No.~2017YFA0303903.

\appendix

\section{Error correction}\label{app:EC}
We will briefly review error correction and derive its cost in this section.
Suppose Alice and Bob have two $n$-bit strings $\vec{x},\vec{y}\in \{0,1\}^{n}$ before error correction. The differences between these two strings are the errors to be corrected, represented by an error string $\vec{e}=\vec{x}\oplus\vec{y}$. Bob aims to reconcile his bit string to Alice's. For this purpose, the essential task is to locate all the errors, i.e., to figure out the error string $\vec{e}$. Denote each bit in an error string as a binary random variable, $E_i\in\{0,1\}$. An error in the $i$th bit is represented by $E_i=1$. The error string is the joint random variable of $\{E_i\}$, denoted as $\vec{E}$. We use the corresponding lowercase letters to denote the specific realizations of random variables. All the possible error strings form a linear space $\Omega$ with a size of $|\Omega|=2^n$. Denote the probability that an error string $\vec{e}\in \Omega$ occurs to be $p(\vec{e})$. In general, the joint random variable is not independent and identically distributed (i.i.d.).

Before error correction, Alice and Bob need to estimate the set of possible errors. In parameter estimation, we allow a failure probability $\varepsilon\geq0$. With the probability of $1-\varepsilon$, the error string will fall into the $\varepsilon$-smallest probable set $\mathcal{T}^\varepsilon_{\vec{E}}$. The aim of parameter estimation is to upper-bound the cardinality of the probable error set, or error cardinality, via the statistics of all measurements in an experiment \cite{Fung2010Finite}.

Given the error cardinality $|\mathcal{T}^\varepsilon_{\vec{E}}|$, Alice and Bob need to exchange a certain amount of parity information to correct the errors. There exists an error correction protocol to remove all errors by exchanging an amount of parity-check bits upper-bounded by
\begin{equation}\label{Eq:ErrorCorrectionCost}
	I_{ec}=\log|\mathcal{T}^\varepsilon_{\vec{E}}|-\log{\varepsilon_{ec}},
\end{equation}
where $\varepsilon_{ec}\geq0$ is the failure probability of error correction.

In the rest of this section, we will show how to derive Eq.~\ref{Eq:ErrorCorrectionCost} and calculate the cost $I_{ec}$ with the error rate. Let us start with the definition of the $\varepsilon$-smallest probable set.

\begin{definition}[$\varepsilon$-Smallest Probable Set] Given $0\leq\varepsilon<\dfrac{1}{2}$, a random variable $X\in \mathcal{X}$ which has a finite range $|\mathcal{X}|<\infty$ and probability mass function $p_X$, the $\varepsilon$-smallest probable set of $X$ is defined as the smallest set, $\mathcal{T}^\varepsilon_X$, such that a realization of $X$ lies in it with a failure probability no larger than $\varepsilon$,
	\begin{equation}
		\begin{aligned}
			\mathcal{T}^\varepsilon_X&=\arg \min_\mathcal{S}|\mathcal{S}|\\
			\text{such that}\; &\Pr(X\in \mathcal{S})\geq 1-\varepsilon.
		\end{aligned}
	\end{equation}
\end{definition}

To tackle the cardinality of $\varepsilon$-smallest probable set, one method is to use the weight of error strings. For an $n$-bit string $\vec{e}\in \{0,1\}^n$, we define a weight function, quantifying the number of $1$s in the string,
\begin{equation}
	wt(\vec{e}) = \sum_{i=1}^n e_i.
\end{equation}
We can bound the cardinality of a set via bounding the weights of the strings, as shown in the following lemma.

\begin{lemma}\label{Thm:SetCardinal}
	Given constants $c\geq0$, $0<r<1$, and $n\in\mathbb{Z}^+$, the cardinality of the $n$-bit string set,
	\begin{align}
		\mathcal{D}(c)&=\left\{\vec{e}\in\{0,1\}^n \mid wt(\vec{e})\in \mathcal{C}\right\}, \label{Eq:TypicalSetA} \\
		\mathcal{C}&=
		\begin{cases}	
			[0,nr+c], & r\le\frac{1}{2} \\
			[nr-c,n], & r>\frac{1}{2}
		\end{cases}, \label{eq:2intervals}
	\end{align}
	can be upper bounded by
	\begin{equation}\label{Eq:SetCardBound}
		|\mathcal{D}(c)| < 2^{nh(r)+c\left|\log\frac{r}{1-r}\right|}.
	\end{equation}
\end{lemma}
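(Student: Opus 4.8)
The plan is to bound $|\mathcal{D}(c)|$ by a sum of binomial coefficients and then apply a standard Chernoff-type tail estimate for the binomial distribution. First I would observe that $\mathcal{D}(c)$ is a union of Hamming spheres: $|\mathcal{D}(c)| = \sum_{w\in\mathcal{C}} \binom{n}{w}$, where $\mathcal{C}$ is the integer interval in Eq.~\eqref{eq:2intervals}. Without loss of generality I would treat the case $r\le \tfrac12$ (the case $r>\tfrac12$ follows by the symmetry $\binom{n}{w}=\binom{n}{n-w}$ and $h(r)=h(1-r)$, replacing $r$ by $1-r$). So it suffices to bound $\sum_{0\le w\le nr+c}\binom{n}{w}$.

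The key trick is to compare this partial sum against the full binomial expansion of $(p+(1-p))^n=1$ for a cleverly chosen bias $p$. Concretely, for any $p\in(0,1)$ and any threshold $w\le np$ one has the classical inequality
\begin{equation}
\sum_{0\le w\le k}\binom{n}{w} \;\le\; \sum_{0\le w\le k}\binom{n}{w}\,\Bigl(\tfrac{1-p}{p}\Bigr)^{k-w} \;\le\; p^{-k}(1-p)^{-(n-k)}\sum_{w=0}^{n}\binom{n}{w}p^{w}(1-p)^{n-w} \cdot p^{n}\!\!,
\end{equation}
but more cleanly: since $(1-p)/p \ge 1$ when $p\le\tfrac12$, each term $\binom{n}{w}$ with $w\le k$ is at most $\binom{n}{w}\big((1-p)/p\big)^{k-w}$, and summing the full binomial series gives $\sum_{0\le w\le k}\binom{n}{w}\le \big(\tfrac{1-p}{p}\big)^{k} \sum_{w=0}^n \binom{n}{w}\big(\tfrac{p}{1-p}\big)^{w} = \big(\tfrac{1-p}{p}\big)^{k}\,(1-p)^{-n}$. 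I would then set $k=\lfloor nr+c\rfloor$ and choose $p=r$, so that $\lvert\mathcal{D}(c)\rvert \le \bigl(\tfrac{1-r}{r}\bigr)^{nr+c}(1-r)^{-n}$. Taking $\log_2$ of the right-hand side yields exactly $(nr+c)\log\frac{1-r}{r} - n\log(1-r) = nh(r) + c\log\frac{1-r}{r}$, and since $r\le\tfrac12$ gives $\log\frac{1-r}{r}\ge 0 = \bigl|\log\frac{r}{1-r}\bigr|$ — wait, these are equal in magnitude — so $\log\frac{1-r}{r}=\bigl|\log\frac{r}{1-r}\bigr|$, which is the claimed exponent. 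The strict inequality in Eq.~\eqref{Eq:SetCardBound} comes from the fact that the full binomial sum strictly exceeds the truncated one (the omitted terms are positive) unless $k\ge n$, a degenerate case one handles separately or rules out by noting $nr+c<n$ is the interesting regime; alternatively the floor in $k=\lfloor nr+c\rfloor$ already supplies slack.

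I expect the main obstacle to be purely bookkeeping rather than conceptual: carefully matching the interval endpoints (the floor/ceiling on $nr+c$, and whether the interval is closed), verifying the sign of $\log\frac{r}{1-r}$ in both regimes so that the absolute value in the statement is correctly reproduced, and confirming the inequality is strict. There is also a minor subtlety that the Lemma allows $c=0$ and requires $0<r<1$ with $r=\tfrac12$ permitted in the first branch — at $r=\tfrac12$ the exponent correction $c\bigl|\log\frac{r}{1-r}\bigr|$ vanishes and the bound reduces to $|\mathcal{D}(c)|<2^n$, which is immediate. The whole argument is a one-paragraph Chernoff bound once the reduction to $\sum_{w\le nr+c}\binom{n}{w}$ is in place.
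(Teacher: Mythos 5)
Your proof is correct and is essentially the paper's own argument: both compare the truncated sum $\sum_{w\le nr+c}\binom{n}{w}$ against the full binomial expansion tilted by the Bernoulli($r$) weights $r^{w}(1-r)^{n-w}$, arriving at the identical intermediate bound $\bigl(\tfrac{1-r}{r}\bigr)^{nr+c}(1-r)^{-n}$, and both dispatch the $r>\tfrac12$ branch by the same bit-flip symmetry. The paper merely phrases the tilting probabilistically --- each string in $\mathcal{D}(c)$ has i.i.d.\ probability at least $2^{-nh(r)-c\left|\log\frac{r}{1-r}\right|}$ while the total probability is $1$ --- which is the same computation as your Chernoff-style rescaling.
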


\begin{proof}
	When $r=1/2$, the bound in Eq.~\eqref{Eq:SetCardBound} is trivial. The case for $r>\dfrac{1}{2}$ is the same as the one for $r<\dfrac{1}{2}$ by switching the definitions of bits 0 and 1. So we only need to prove the lemma for $r<\dfrac{1}{2}$.
	
	We introduce a function of a binary variable $e\in\{0,1\}$,
	\begin{equation}\label{Eq:iidProb}
		p(e)=
		\begin{cases}	
			r, & e=1 \\
			1-r, & e=0
		\end{cases},
	\end{equation}
	and we can show that,
	\begin{equation} \label{eq:1SampleEnt}
		\sum_{\vec{e}\in\{0,1\}^n}2^{\sum_{i=1}^n\log{p(e_i)}} = \prod_{i=1}^{n}[p(e_i=0)+p(e_i=1)] = 1.
	\end{equation}
	For a bit string $\vec{e}=(e_1,\cdots,e_n)$ with a weight of $wt(\vec{e})=k$, we define a function,
	\begin{equation}\label{Eq:LogProb}
		g(k)=k\log{r}+(n-k)\log(1-r)=\sum_{i=1}^n\log{[p(e_i)]}.
	\end{equation}
	For $\vec{e}\in\mathcal{D}(c)$, we have $ k\in [0,nr+c]$ and $g(k)$ is a decreasing function of $k$ when $r<\dfrac{1}{2}$, so,
	\begin{equation}\label{Eq:StrSampleEnt}
		\begin{split}
			g(k)&\geq g(nr+c)\\
			&=(nr+c)\log{r}+(n-nr-c)\log(1-r) \\
			&=-nh(r)-c\left|\log{\dfrac{r}{1-r}}\right|.
		\end{split}
	\end{equation}
	Then, combining Eqs.~\eqref{eq:1SampleEnt}--\eqref{Eq:StrSampleEnt}, we have
	\begin{equation}
		\begin{aligned}
			1 &=\sum_{\vec{e}\in\{0,1\}^n}2^{\sum_{i=1}^n\log{p(e_i)}} \\
			&=\sum_{\vec{e}\in\{0,1\}^n}2^{g(k)}\\
			&\geq \sum_{\vec{e}\in\mathcal{D}(c)}2^{g(k)} \\
			&\geq 2^{-nh(r)-c\left|\log{\frac{r}{1-r}}\right|} |\mathcal{D}(c)|,
		\end{aligned}
	\end{equation}
	where the third and fourth inequalities cannot take equal signs simultaneously. Finally, we obtain Eq.~\eqref{Eq:SetCardBound}.
	
\end{proof}

Because the weight of an $n$-bit string $\vec{e}$ satisfies $0 \le wt(\vec{e})\le n$, one can combine the two intervals of Eq.~\eqref{eq:2intervals} in the lemma to get a smaller set, as given in the following corollary.

\begin{corollary}\label{Thm:SetCardinalSym}
	Given constants $c\geq0$, $0<r<1$, and $n\in\mathbb{Z}^+$, the cardinality of the $n$-bit string set,
	\begin{equation}
		\mathcal{D}(c)=\left\{\vec{e}\in\{0,1\}^n \mid wt(\vec{e})\in \left[nr-c,nr+c\right]\right\},
	\end{equation}
	can be upper bounded by
	\begin{equation}
		\begin{aligned}
			|\mathcal{D}(c)| < 2^{nh(r)+c\left|\log\frac{r}{1-r}\right|}.
		\end{aligned}
	\end{equation}
\end{corollary}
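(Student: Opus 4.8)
The plan is to derive Corollary~\ref{Thm:SetCardinalSym} directly from Lemma~\ref{Thm:SetCardinal} by a pure set-inclusion argument, with essentially no new computation. The key observation is that the symmetric weight window $[nr-c,\,nr+c]$ appearing in the corollary is always contained in the one-sided window $\mathcal{C}$ used in Lemma~\ref{Thm:SetCardinal}, once we recall that the weight of any $n$-bit string necessarily lies in $[0,n]$.

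First I would split on the value of $r$, mirroring the case distinction in the lemma. If $r\le\tfrac12$, then for every $\vec{e}\in\{0,1\}^n$ we have $wt(\vec{e})\ge 0$, so $wt(\vec{e})\in[nr-c,nr+c]$ implies $wt(\vec{e})\in[0,nr+c]$; hence the corollary's set is a subset of the set $\mathcal{D}(c)$ from Lemma~\ref{Thm:SetCardinal}. If $r>\tfrac12$, then $wt(\vec{e})\le n$, so $wt(\vec{e})\in[nr-c,nr+c]$ implies $wt(\vec{e})\in[nr-c,n]$, and again the corollary's set is a subset of the lemma's $\mathcal{D}(c)$. The degenerate ranges, where $nr-c<0$ or $nr+c>n$, only make the inclusion more transparent, since one endpoint of the window is then vacuous against $[0,n]$.

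Second, I would invoke monotonicity of cardinality under inclusion: a subset of a finite set has cardinality at most that of the set. Combined with the strict bound $|\mathcal{D}(c)|<2^{nh(r)+c\left|\log\frac{r}{1-r}\right|}$ supplied by Lemma~\ref{Thm:SetCardinal}, this immediately yields the same strict inequality for the smaller, symmetric set, which is exactly the statement of the corollary.

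I do not expect any genuine obstacle here. The only point requiring a little care is handling the boundary and degenerate cases cleanly so that the inclusion $\{\vec{e}\in\{0,1\}^n\mid wt(\vec{e})\in[nr-c,nr+c]\}\subseteq\mathcal{D}(c)$ holds literally in every case, and noting that the strict ``$<$'' is inherited because a subset's cardinality is bounded above by the superset's, which is in turn strictly below the stated bound.
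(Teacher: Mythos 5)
Your argument is correct and coincides with the paper's own justification: the text preceding the corollary notes that since $0\le wt(\vec{e})\le n$, the symmetric window $[nr-c,nr+c]$ defines a subset of the one-sided set in Lemma~\ref{Thm:SetCardinal}, so the same cardinality bound applies. No further comment is needed.
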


We remark that the result holds even when the random variables, $E_i$, associated with the set $\mathcal{D}(c)$ are arbitrarily correlated, not necessarily i.i.d.. Alice and Bob can use Lemma \ref{Thm:SetCardinal} to bound the cardinality of $\varepsilon$-smallest probable set if they can estimate $r$ and $c$. Before error correction, they can estimate the error rate with a failure probability of $\varepsilon$, say, via the random sampling method. Suppose they obtain an error frequency of $r$ in the test samples. Without loss of generality, we assume $r\le1/2$. In the asymptotic case $n\rightarrow\infty$, the number of errors in the data is given by $nr$. With a finite data size, the rate fluctuates around $r$. Alice and Bob can bound the number of errors, $wt(\vec{e})\le nr+c$, via the random sampling method, where $c/n$ represents the deviation of the error rate from the test samples. The deviation $c/n$ is usually related to the failure probability of parameter estimation $\varepsilon$ and typically has an order of $1/\sqrt{n}$ \cite{Fung2010Finite}. With $r$, $c$, and $n$, they can apply Lemma~\ref{Thm:SetCardinal} to determine the error cardinality.

Here, we introduce another method to upper bound the cardinality of typical error sets, which is tighter under more restricted conditions \cite{Fung2010Finite}.

\begin{lemma}[\cite{Fung2010Finite}]\label{lem:Fung13bd}
	Given constants $c, r\geq0$ and $n\in\mathbb{Z}^+$ satisfying $r+c/n\le\frac{1}{3}$, the cardinality of the $n$-bit string set,
	\begin{equation}
		\mathcal{D}(c)=\left\{\vec{e}\in\{0,1\}^n \mid wt(\vec{e})\in [0,nr+c)\right\},
	\end{equation}
	can be upper bounded by
	\begin{equation}\label{fredresult}
		|\mathcal{D}(c)|<2^{n\cdot h(r+\frac{c}{n})}.
	\end{equation}
\end{lemma}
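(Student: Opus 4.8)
The plan is to strip away the quantum-cryptographic wrapping and recognize Lemma~\ref{lem:Fung13bd} as the classical tail bound on binomial coefficients, $\sum_{0\le k<\delta n}\binom{n}{k}<2^{nh(\delta)}$, specialized to $\delta=r+c/n$. First I would recast the set combinatorially: $\mathcal{D}(c)$ is exactly the collection of $n$-bit strings of Hamming weight strictly below $nr+c=\delta n$, so $|\mathcal{D}(c)|=\sum_{0\le k<\delta n}\binom{n}{k}$, and it suffices to bound this partial sum. If $nr+c=0$ the sum is empty and $0<2^{nh(0)}=1$ settles the claim, so I may assume $\delta n>0$; the hypothesis $r+c/n\le\tfrac13$ guarantees $0<\delta\le\tfrac13<\tfrac12$, which is all the ``smallness'' the argument uses.

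For the partial sum I would run the standard measure-weighting (Chernoff-type) estimate. Equip $\{0,1\}^n$ with the Bernoulli-$\delta$ product measure, so a string $\vec e$ of weight $k$ has mass $p(\vec e)=\delta^{k}(1-\delta)^{n-k}=(1-\delta)^{n}\bigl(\tfrac{\delta}{1-\delta}\bigr)^{k}$, and $\sum_{\vec e}p(\vec e)=1$. Since $\delta<\tfrac12$ forces $\tfrac{\delta}{1-\delta}<1$, the map $k\mapsto\bigl(\tfrac{\delta}{1-\delta}\bigr)^{k}$ is non-increasing, so for every $k\le\delta n$ one gets $p(\vec e)\ge(1-\delta)^{n}\bigl(\tfrac{\delta}{1-\delta}\bigr)^{\delta n}=\delta^{\delta n}(1-\delta)^{(1-\delta)n}=2^{-nh(\delta)}$. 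Summing over $\vec e\in\mathcal{D}(c)$ gives $|\mathcal{D}(c)|\,2^{-nh(\delta)}\le\sum_{\vec e\in\mathcal{D}(c)}p(\vec e)\le 1$, hence $|\mathcal{D}(c)|\le 2^{nh(\delta)}$.

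To sharpen ``$\le$'' to the strict ``$<$'' of the statement, I would exhibit one string on which the last inequality is strict: when $\delta n>0$ the all-zeros string lies in $\mathcal{D}(c)$ with mass $(1-\delta)^{n}$, which strictly exceeds $2^{-nh(\delta)}=\delta^{\delta n}(1-\delta)^{(1-\delta)n}$ precisely because $1-\delta>\delta$. Thus $\sum_{\vec e\in\mathcal{D}(c)}p(\vec e)$ is a sum of terms each at least $2^{-nh(\delta)}$ with at least one strictly larger, forcing $|\mathcal{D}(c)|\,2^{-nh(\delta)}<1$ and therefore $|\mathcal{D}(c)|<2^{nh(\delta)}$.

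I do not expect a genuine obstacle here --- the argument is routine --- but the point deserving care is the direction of $\tfrac{\delta}{1-\delta}\le 1$, which is exactly what the restriction $r+c/n\le\tfrac13$ (indeed already $\le\tfrac12$) secures: it makes a maximal-weight string the \emph{lowest}-probability one in $\mathcal{D}(c)$ and keeps $h$ monotonically increasing on $[0,\delta]$, so the exponent can be written at the clean value $h(r+c/n)$ rather than at the messier true top weight; without it the sign flips and the bound fails. One could alternatively estimate the dominant binomial $\binom{n}{\lceil\delta n\rceil-1}$ by Stirling and sum the tail as a near-geometric series, but the measure-weighting route avoids all such fiddly constants.
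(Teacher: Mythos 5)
Your proof is correct, but it takes a genuinely different route from the paper's. The paper bounds the partial sum of binomial coefficients by the single coefficient at the top weight, $\sum_{0\le k<nr+c}\binom{n}{k}<\binom{n}{\lceil nr+c\rceil}$ --- this is where the hypothesis $r+c/n\le\tfrac{1}{3}$ is actually used, since it forces consecutive binomial coefficients in that range to grow by a factor of at least $2$, so the tail is geometrically dominated --- and then applies the standard single-coefficient estimate $\binom{n}{\lceil nr+c\rceil}\le 2^{nh(r+c/n)}$. You instead run the Bernoulli-measure (Chernoff-type) weighting argument directly on the whole set with parameter $\delta=r+c/n$; this is exactly the technique the paper uses to prove Lemma~\ref{Thm:SetCardinal}, but your choice of weighting parameter $\delta=r+c/n$ rather than $r$ lands directly on the tighter exponent $nh(r+c/n)$. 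Your edge cases are handled correctly (the empty set when $\delta n=0$; strictness via the all-zeros string, or even more simply because $\mathcal{D}(c)$ excludes the all-ones string, which carries positive mass). What your route buys: the condition $r+c/n\le\tfrac{1}{3}$ is never needed beyond $r+c/n\le\tfrac{1}{2}$, so you prove a slightly more general statement, and you make transparent that Lemma~\ref{Thm:SetCardinal} and Lemma~\ref{lem:Fung13bd} are the same estimate with different choices of the weighting parameter. What the paper's route buys: the two-sided sandwich $\binom{n}{\lceil nr+c\rceil-1}\le|\mathcal{D}(c)|<\binom{n}{\lceil nr+c\rceil}$ of Eq.~\eqref{eq:SumBinoBdul}, which is how the authors argue the bound is essentially tight; your argument does not deliver that lower-bound half.
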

\begin{proof}
	By definition, we have
	\begin{equation} \label{eq:SumBinoBd}
		\begin{aligned}
			|\mathcal{D}(c)| &= \sum_{0\le k<nr+c} \binom{n}{k}\\
			&< \binom{n}{\lceil nr+c\rceil}\\
			&\le 2^{n\cdot h(r+\frac{c}{n})}.
		\end{aligned}
	\end{equation}
	The first inequality holds when $r\leq\dfrac{1}{3}$.
\end{proof}

This lemma provides a tight bound because in Eq.~\eqref{eq:SumBinoBd},
\begin{equation} \label{eq:SumBinoBdul}
	\binom{n}{\lceil nr+c\rceil-1} \le \sum_{0\le k<nr+c} \binom{n}{k}<\binom{n}{\lceil nr+c\rceil},
\end{equation}
and the last inequality in Eq.~\eqref{eq:SumBinoBd} is also tight in the logarithm sense.

When the error rate deviation is small, $c/n\ll1$, we can take the Taylor series expansion of $h(r+\frac{c}{n})$ at $r$, we can get
\begin{equation}\label{Eq:taylor}
	\begin{split}
		h\left(r+\frac{c}{n}\right)&=h(r) +h'(r)\dfrac{c}{n} +\dfrac{h''(r)}{2!}\left(\dfrac{c}{n}\right)^2 +\cdots \\
		&=h(r)+\frac{c}{n}\log\frac{1-r}{r}+\dfrac{h''(r)}{2!}\left(\dfrac{c}{n}\right)^2+\cdots
	\end{split}
\end{equation}
By comparing Eqs.~\eqref{Eq:SetCardBound}, \eqref{fredresult}, and \eqref{Eq:taylor}, one can see that Lemma \ref{Thm:SetCardinal} is a first-order approximation of Lemma \ref{lem:Fung13bd}. Since $h''(r)<0$, Lemma \ref{lem:Fung13bd} provides a tighter bound than Lemma \ref{Thm:SetCardinal}. On the other hand, Lemma \ref{Thm:SetCardinal} does not require $r\le\frac{1}{3}$, so it can be applied to more general cases.

Now, we show how to locate the errors given the probable error set $\mathcal{T}^\varepsilon$, where $\varepsilon$ is the failure probability for parameter estimation, $\sum_{\vec{e}\in\mathcal{T}^\varepsilon}p(\vec{e})\geq 1-\varepsilon$. In the following, we shall introduce error correction based on universal hash functions.

\begin{definition}[Universal Hash family]
	A family of functions $\mathcal{F}$ mapping elements $\vec{e}$ in a space $\mathcal{T}$ to another space $\mathcal{S}$ is called a universal hash family if the probability of a randomly chosen hash function outputting the same hashing result for any two different strings is upper bounded by
	\begin{equation}\label{eq:universal2}
		\forall \vec{e}_i\neq \vec{e}_{j} \in \mathcal{T}, \;	\Pr_{f\in\mathcal{F}}\left[ f(\vec{e}_i)=f(\vec{e}_{j})\right]\leq \dfrac{1}{|\mathcal{S}|}.
	\end{equation}
\end{definition}

Given the probable error set $\mathcal{T}^{\varepsilon}$, Alice and Bob decide a universal hash family $\mathcal{F}$ and choose a hash function $f\in\mathcal{F}$ randomly. Alice applies the hash function to her bit string $\vec{x}$ and sends the hashing result $\vec{s}_A=f(\vec{x})$ to Bob, who calculates the syndrome $\vec{s}=f(\vec{e})$ based on $\vec{s}_A$, $\vec{y}$ and $f$. If the hash functions are linear, then
\begin{equation}
	\begin{aligned}
		\vec{s}&=\vec{s}_A\oplus\vec{s}_B \\
		&=f(\vec{x})\oplus f(\vec{y}) \\
		&=f(\vec{x}\oplus\vec{y}) \\
		&=f(\vec{e}).
	\end{aligned}
\end{equation}
The linear hash function is normally implemented by multiplication of a hashing matrix.

If there is only one error string $\vec{e}\in\mathcal{T}^{\varepsilon}$ that satisfies $f(\vec{e})=\vec{s}$, Bob can figure out $\vec{e}$ correctly in principle. The probability of Bob figuring out a wrong error string is upper-bounded by
\begin{equation}\label{eq:epsionEC}
	\begin{aligned}
		\sum_{\vec{e}'\in \mathcal{T}^{\varepsilon}\setminus\{\vec{e}\}}\Pr_{f\in\mathcal{F}}\left[f(\vec{e}')=f(\vec{e})\right]&\leq \sum_{\vec{e}'\in \mathcal{T}^{\varepsilon}\setminus\{\vec{e}\}}\dfrac{1}{|\mathcal{S}|}\\
		&<\dfrac{|\mathcal{T^\varepsilon}|}{|\mathcal{S}|} \\ &= 2^{-(I_{ec}-k)} \\
		&\equiv \varepsilon_{ec},
	\end{aligned}
\end{equation}
where $\mathcal{T}^{\varepsilon}\setminus\{\vec{e}\}$ denotes the subset of $\mathcal{T}^{\varepsilon}$ excluding the element $\vec{e}$, the first inequality comes from Eq.~\eqref{eq:universal2}, the second inequality comes from $|\mathcal{T}^{\varepsilon}\setminus\{\vec{e}\}|=|\mathcal{T^\varepsilon}|-1$, $I_{ec}=\log|\mathcal{S}|$ is the effective length of syndrome $\vec{s}$, and $k=\log|\mathcal{T^\varepsilon}|$ is the logarithm of error cardinality. The failure probability $\varepsilon_{ec}$ approaches to zero exponentially with respect to $I_{ec}-k$.
If $I_{ec}-k$ is large enough, the failure probability will become negligible such that Bob can figure out $\vec{e}$ almost surely and then perform error correction according to the syndrome. 
We emphasize that $\varepsilon_{ec}$ is the failure probability for error correction and is different from the failure probability $\varepsilon$ for parameter estimation.
We can  derive Eq.~\eqref{Eq:ErrorCorrectionCost} from Eq.~\eqref{eq:epsionEC} by taking the logarithm,
\begin{equation}
	-I_{ec}-\log|\mathcal{T^\varepsilon}|=\log\varepsilon_{ec}.
\end{equation}
The cardinality of $T^\varepsilon$ is given by Lemma \ref{Thm:SetCardinal}. Then, we have
\begin{equation}
	\begin{split}
		\dfrac{1}{n}I_{ec}&=\dfrac{1}{n}\log|\mathcal{T^\varepsilon}|-\dfrac{1}{n}\log\varepsilon_{ec}\\
		&<h(r)+\dfrac{c}{n}\left|\log\frac{r}{1-r}\right|-\dfrac{1}{n}\log\varepsilon_{ec}\\
		&\overset{n\rightarrow\infty}{\longrightarrow} h(r).
	\end{split}
\end{equation}
In the third line, we take the asymptotic limit, $\dfrac{c}{n}\rightarrow 0$ because normally in parameter estimation $c$ has an order of $1/\sqrt{n}$, and $\dfrac{1}{n}\log\varepsilon_{ec}\rightarrow 0$ since $\varepsilon_{ec}$ is a constant. This gives Shannon's source encoding theorem in classical information theory.

\section{Reuse of hash function} \label{App:Reuse}
In the derivation of Eq.~\eqref{eq:epsionEC}, we implicitly assume errors cannot depend on the choice of the hash function. This requirement can be further understood in an adversary scenario, where the error string $\vec{e}$ is determined by an adversary Eve. In this case, Eve does not know the hash function chosen by Alice and Bob before she fixes the error string $\vec{e}$, or, before Alice and Bob obtain their bit strings $\vec{x}$ and $\vec{y}$, respectively. Otherwise, with prior knowledge of the hash function, Eve can choose the error string craftily such that it lies outside the error space that the chosen hash function can handle. As a result, Alice and Bob cannot figure out the correct error string $\vec{e}$, and the error correction would fail. To guarantee this requirement, there are two possible ways for Alice and Bob to decide the hash function.
\begin{enumerate}
	\item
	Alice and Bob decide the hash function after the error string is fixed, i.e., after the bit strings $\vec{x},\vec{y}$ are obtained. Then, they exchange some random bits through a public channel to determine a hash function randomly.
	\item
	Alice and Bob decide the hash function randomly by consuming some preshared private randomness and keep the hash function secret from Eve. In later error correction, they can reuse the same matrix at the cost that the total failure probability increases linearly with the number of sessions. We state the result formally in Lemma \ref{lemma:npatterns}.
\end{enumerate}

\begin{lemma}\label{lemma:npatterns}
	Given a set of error strings $\mathcal{T}^{\varepsilon}$ and a family of hash functions $\mathcal{F}$, $\forall\vec{e}\in\mathcal{T}^{\varepsilon}$, suppose the failure probability of a randomly chosen hash function to identify $\vec{e}$ is upper bounded by $\varepsilon_{ec}$. For any $m$ error strings in $\mathcal{T}^{\varepsilon}$, the failure probability of a randomly chosen hash function to simultaneously identify all the $m$ error strings in each session is upper bounded by $m\varepsilon_{ec}$.
\end{lemma}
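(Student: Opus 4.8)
The plan is to reduce the claim to a single application of the union bound over the $m$ sessions. First I would fix notation: let $\vec{e}_1,\dots,\vec{e}_m\in\mathcal{T}^{\varepsilon}$ be the $m$ error strings (fixed \emph{before} the hash function is drawn, as stressed in the adversary discussion preceding the lemma), and let $f$ be drawn uniformly at random from $\mathcal{F}$. For each $j\in\{1,\dots,m\}$ define the ``bad event''
\begin{equation}
	B_j=\left\{f:\ \exists\,\vec{e}'\in\mathcal{T}^{\varepsilon}\setminus\{\vec{e}_j\}\ \text{with}\ f(\vec{e}')=f(\vec{e}_j)\right\},
\end{equation}
i.e.\ the event that the syndrome $f(\vec{e}_j)$ does not pin down $\vec{e}_j$ uniquely inside $\mathcal{T}^{\varepsilon}$. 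By the hypothesis of the lemma (which is exactly the bound established in Eq.~\eqref{eq:epsionEC}), $\Pr_{f\in\mathcal{F}}[B_j]\le\varepsilon_{ec}$ for every $j$, since $\vec{e}_j$ already ranges over all of $\mathcal{T}^{\varepsilon}$ there.

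The key observation is that the same $f$ is used in all $m$ sessions, so all the $B_j$ are events on the single sample space $\mathcal{F}$; hence the event ``error correction fails in at least one of the $m$ sessions'' is precisely $\bigcup_{j=1}^m B_j$. Then I would apply the union bound,
\begin{equation}
	\Pr_{f\in\mathcal{F}}\!\left[\bigcup_{j=1}^m B_j\right]\le\sum_{j=1}^m\Pr_{f\in\mathcal{F}}[B_j]\le m\,\varepsilon_{ec},
\end{equation}
which is the desired bound. No independence among the $B_j$ is needed, which is good because they are in general highly correlated (they share the same $f$).

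The only point that needs care — and the place I would expect a referee to push back — is justifying that ``$f$ fails to identify $\vec{e}_j$'' genuinely means $B_j$ as defined, independently of which other error strings occur in the other sessions. This holds because the per-session failure bound $\varepsilon_{ec}$ in Eq.~\eqref{eq:epsionEC} already quantifies over collisions with \emph{every} other element of $\mathcal{T}^{\varepsilon}$, not just the ones realized in this particular collection of $m$ sessions; so enlarging or changing the set of the other $m-1$ strings cannot enlarge $B_j$. I would also remark that the argument is agnostic to linearity of the hash family and carries over verbatim to the phase-error-correction setting, which is what Theorem~\ref{thm:reuse} then invokes with $\varepsilon_{ec}\to\varepsilon$.
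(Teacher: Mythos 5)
Your proposal is correct and follows essentially the same route as the paper: define the per-session bad event $B_j$ (a collision of $f(\vec{e}_j)$ with some other element of $\mathcal{T}^{\varepsilon}$), invoke the single-session bound from Eq.~\eqref{eq:epsionEC}, and apply the union bound over the $m$ sessions sharing the same $f$. Your added remarks on the events living on the common sample space $\mathcal{F}$ and on the error strings being fixed before $f$ is drawn are consistent with the paper's own caveats and do not change the argument.
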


\begin{proof}
	$\forall \vec{e}\in\mathcal{T}^{\varepsilon}$, the failure probability of error correction is given by Eq.~\eqref{eq:epsionEC},
	\begin{equation}
		\Pr_{f\in \mathcal{F}}\left[\exists\vec{e}'\in\mathcal{T}^{\varepsilon}\setminus\{\vec{e}\}, s.t., f(\vec{e}')=f(\vec{e})\right] \leq \varepsilon_{ec},
	\end{equation}
	where the probability is defined in the hashing family. Then, the failure probability of identifying $m$ error strings simultaneously is given by,
	\begin{equation}
		\begin{aligned}
			&\Pr_{f\in \mathcal{F}}\left[\exists\vec{e}'\in\mathcal{T}^{\varepsilon}, s.t., \vec{e}'\neq\vec{e}_1, f(\vec{e}')=f(\vec{e}_1) \text{ or } \vec{e}'\neq\vec{e}_2, f(\vec{e}')=f(\vec{e}_2) \text{ or } \dots \text{ or }  \vec{e}'\neq\vec{e}_m, f(\vec{e}')=f(\vec{e}_m) \right] \\
			&\leq \sum_{i=1}^{m} \Pr_{f\in \mathcal{F}} \left[\exists\vec{e}'\in\mathcal{T}^{\varepsilon}\setminus\{\vec{e}_i\}, s.t., f(\vec{e}')=f(\vec{e}_i)\right] \\
			&\leq m \varepsilon_{ec},
		\end{aligned}
	\end{equation}
	where the first inequality follows the union bound and $\vec{e}_i\in\mathcal{T}^{\varepsilon}$ is the error string in the $i$th session. 
\end{proof}

Here, we note that Eve does not know the hash function $f$ before she determines the error patterns $\vec{e}_i$. That is, her choices of $\vec{e}_i$ should be independent of Alice and Bob's choice of $f\in \mathcal{F}$. Otherwise, the failure probability bound might not hold. Interestingly, this is not the case if Alice and Bob can verify the leftover errors in corrected strings, say, by exchanging an authentication tag \cite{Fung2010Finite}. Then, the failure probability is determined by the error verification process, but not the property of the hashing family. In this case, the hash function can be fixed in the beginning and known to Eve.

\section{Quantum error correction with hashing}\label{app:QECwithhash}
As mentioned in the main text, we can use Eq.~\eqref{eq:rhoABf} to further derive Eqs.~\eqref{eq:aimofQECbit} and \eqref{eq:aimofQECphase}, which are related to bit and phase error, respectively, and these two equations can be achieved by quantum error correction. We divide the procedure of quantum error correction into two steps, bit error correction to guarantee Eq.~\eqref{eq:aimofQECbit} and phase-error correction to guarantee Eq.~\eqref{eq:aimofQECphase}.

Quantum error correction can be seen as an extension of classical error correction and accomplished by using universal hashing \cite{brassard1994advances,lutkenhaus1999estimates,ma2011universally}. In the classical case, Alice and Bob each possess a bit string. The differences between these two strings are called ``errors". The main job of classical bilateral error correction is to figure out the error locations. Alice hashes her string and sends the parity information to Bob. With the same hash function, Bob hashes his string and compares it to Alice's. After enough times of this procedure, Bob can figure out the error locations and flip his corresponding bits to correct the errors. In the end, Bob's bit string is reconciled with Alice's. The number of parity-check bits is given by $n h(e)$ in the Shannon limit, where $n$ is the bit string length and $e$ is the error rate. For the case of a finite data size, there is the possibility that error correction fails. Details of error correction along with analysis of finite size effect and failure probability are presented in Appendix \ref{app:EC}.

In the following discussions, we consider linear hash functions, which can be represented by hashing matrices, for simplicity. One of the most widely used linear hash families is the family of Toeplitz matrices. The elements in a Toeplitz matrix $M$ satisfies $\forall i-j=i'-j', M_{ij}=M_{i'j'}$ and there are $m+n-1$ free bits in a Toeplitz matrix of size $m\times n$. We give an example of the Toeplitz hashing matrix as follows,
\begin{equation}\label{Eq:Hashing}
	M = \left(
	\begin{array}{cccccc}
		1 & 0 & 1 & 1 & 0 & \cdots   \\
		0 & 1 & 0 & 1 & 1 & \cdots   \\
		0 & 0 & 1 & 0 & 1 & \cdots   \\
		\vdots &  &  & \ddots &  & \cdots \\
	\end{array}
	\right)_{nh(e)\times n}.
\end{equation}
As shown before, if Alice and Bob randomly choose a Toeplitz matrix of size $nh(e)\times n$ for error correction, the efficiency converges to the Shannon limit very fast when $n$ is large.

Now, we can apply classical error correction to the quantum case. Let us start with quantum bit error correction. The parity hashing on the raw key bit strings can be implemented by a series of CNOT gates between the data qubits and ancillary EPR pairs, as shown in Figure \ref{fig:BitEC}, which is a concrete example of the quantum bit error correction part in Figure \ref{fig:QECprocedure}(a). Alice and Bob can get the parity information by measuring the ancillary qubits. The measurement result of one ancillary qubit will reflect 1 bit of parity information of the data qubits. The measurement results on the ancillary pair of Alice and Bob will be different if there is an odd number of errors in these control qubits, and the results will be the same if there is no error or an even number of errors. Alice sends the parity information to Bob, who then figures out the error syndromes and corrects the errors. The property of the universal hash family guarantees that Bob can correct all the errors with a small failure probability. A similar approach can be employed for quantum phase-error correction, with additional Hadamard gates before hash operations and measurements.

\begin{figure}[hbt!]
	\centering \includegraphics[width=15cm]{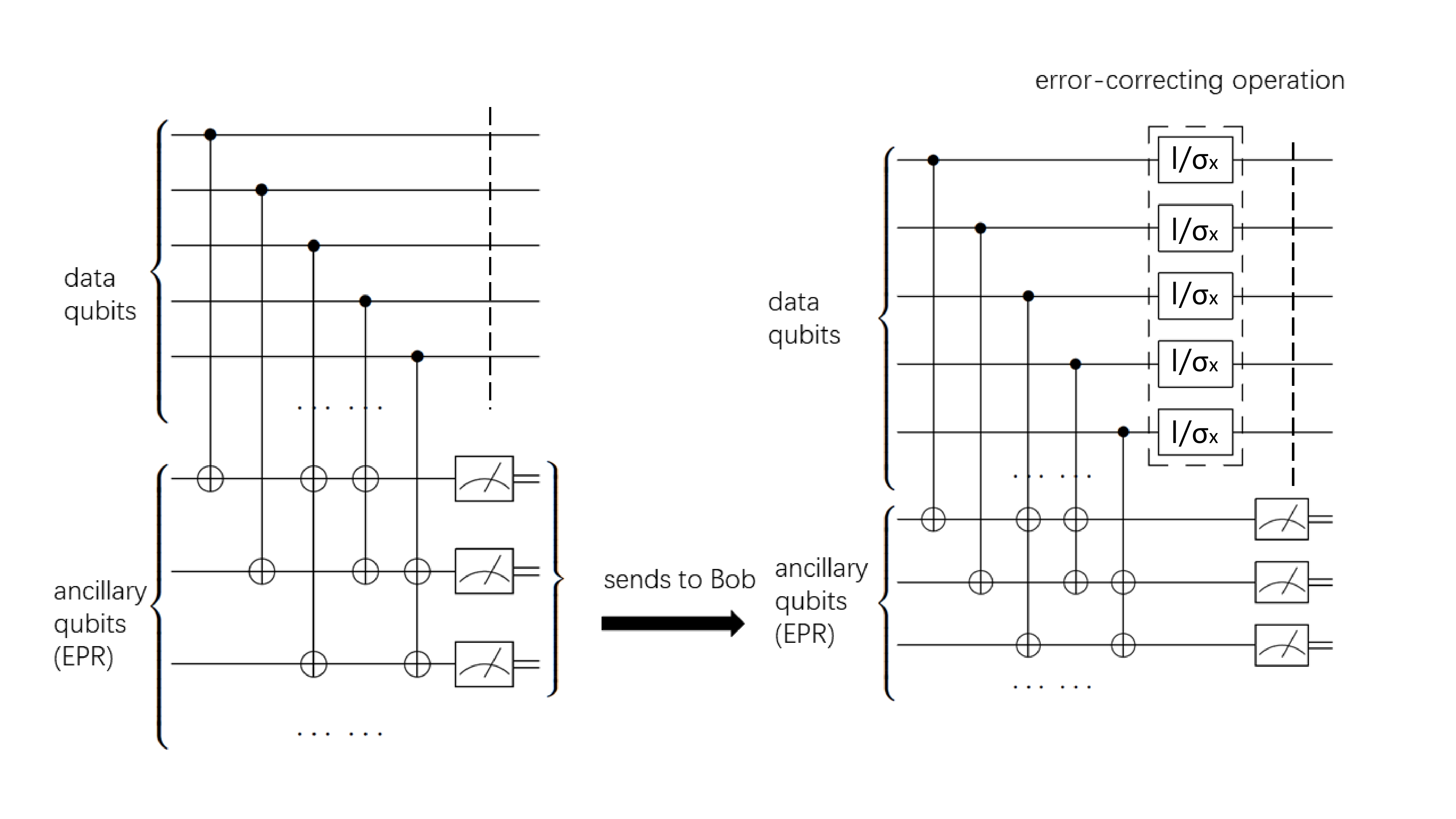}
	\caption{Illustration of quantum bit error correction. As an example, the circuit corresponds to the hashing matrix $M$ in Eq.~\eqref{Eq:Hashing}. The $i$th row of $M$ corresponds to the CNOT control data qubits targeting on the $i$th ancillary qubit, and the $j$th column of $M$ corresponds to the CNOT target ancillary qubits controlled by the $j$th data qubit. The measurement result of one ancillary qubit equals to the XOR sum of the $Z$-basis measurement results of the ancillary qubit and its controlling data qubits. By comparing the measurement results, Bob can learn 1-bit parity information of the data qubits. After knowing enough parity information on the data qubits, Bob can locate the bit errors and correct them with the quantum gate $\sigma_x$. A similar circuit can be applied for quantum phase-error correction with additional Hadamard gates according to Figure \ref{fig:QECprocedure}(a).}
	\label{fig:BitEC}
\end{figure}

In entanglement distillation, both bit and phase-error correction should be successfully implemented. The two quantum error correction procedures are carried out sequentially. Hence, we need to make sure that these two error correction procedures do not interfere with each other.
Fortunately, by using ancillary EPR pairs, we can decouple these two steps with the following lemma.

\begin{lemma}[Bit and phase-error correction decoupling \cite{lo2003method}]
	By using EPR pairs as ancillary qubits, bit error correction has no effect on phase errors and vice versa.
\end{lemma}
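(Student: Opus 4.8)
The plan is to argue in the Heisenberg picture (Pauli/stabilizer tracking), following how the error-check observables evolve through each correction sub-circuit rather than tracking the full state, which is genuinely altered by the measurements and corrections. Recall that bit errors on the data register are diagnosed by the $Z$-type checks $(\sigma_z\otimes\sigma_z)$ on data pairs, phase errors by the $X$-type checks $(\sigma_x\otimes\sigma_x)$, and that each ancillary EPR pair $\ket{\Phi^+}_{\beta\beta'}$ obeys the two stabilizer relations $(\sigma_x)_\beta(\sigma_x)_{\beta'}=+1$ and $(\sigma_z)_\beta(\sigma_z)_{\beta'}=+1$. I would make the statement ``bit-error correction has no effect on phase errors'' precise as: the bit-error-correction sub-circuit --- bilateral CNOTs with data qubits as controls and ancilla qubits as targets, then $Z$-basis ancilla measurements, then the syndrome-conditioned $\sigma_x$ corrections on the data --- leaves the expectation value of every $X$-type observable on the data unchanged (equivalently, the induced channel on the data commutes with dephasing in the $X$ basis), and symmetrically for the other direction.

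For the forward direction, let $P$ be any $X$-type Pauli supported on the data qubits, e.g. a phase check $(\sigma_x)_\alpha(\sigma_x)_{\alpha'}$. Under a CNOT $C_{\alpha\beta}$ with data control $\alpha$ and ancilla target $\beta$ one has $(\sigma_x)_\alpha\mapsto(\sigma_x)_\alpha(\sigma_x)_\beta$ while target-$X$ operators are invariant; since the syndrome CNOTs come in bilateral pairs, conjugating $P$ through the whole CNOT network multiplies it by a product of \emph{complete} ancilla operators $(\sigma_x)_\beta(\sigma_x)_{\beta'}$, each a stabilizer of the corresponding ancilla EPR pair --- and each remains a stabilizer after the bilateral CNOTs precisely because target-$X$ is CNOT-invariant. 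Hence the conjugated operator has the same expectation as $P$ on the post-CNOT state. The subsequent $Z$-basis ancilla measurements commute with $P$ (disjoint support), so they leave $\langle P\rangle$ fixed, and the syndrome-conditioned $\sigma_x$ corrections on the data are $X$-type and thus also commute with $P$. Chaining these, $\langle P\rangle$ is invariant under the entire bit-error-correction sub-circuit for every $X$-type $P$ --- in particular for $(\sigma_x\otimes\sigma_x)^{\otimes n}$ and for every phase-syndrome operator --- so the phase-error pattern is untouched. Intuitively, in the bilateral-XOR picture of entanglement purification, bit errors propagate from the data (source) onto the ancilla (target), where they are read out, while phase errors on the data remain on the data; the only phase errors that could propagate back onto the data are those carried by the ancilla, of which a perfect EPR pair has none.

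The reverse direction follows by the symmetric argument with $X$ and $Z$ interchanged --- equivalently, by conjugating the bit-error-correction circuit with Hadamards on the involved qubits, using $H\sigma_xH=\sigma_z$ and $H^{\otimes 2}C_{\alpha\beta}H^{\otimes 2}=C_{\beta\alpha}$: now the ancilla's $Z$-stabilizer $(\sigma_z)_\beta(\sigma_z)_{\beta'}$ plays the absorbing role, the (rotated) ancilla measurement commutes with every $Z$-type data observable, and the corrections are $Z$-type, so $\langle Q\rangle$ is preserved for every $Z$-type $Q$ on the data, in particular $(\sigma_z\otimes\sigma_z)^{\otimes n}$; hence bit errors are untouched by phase-error correction. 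Combining the two halves, running bit-error correction and then phase-error correction fixes the $Z$-checks without spoiling the $X$-checks and then fixes the $X$-checks without reintroducing $Z$-errors, which is the asserted decoupling.

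The main obstacle, and the delicate point, is the step asserting that the $Z$-basis ancilla measurement does not disturb the data's phase observables even though the CNOTs have correlated data and ancilla: the resolution is that the correlation created is purely a $ZZ$ (bit-parity) correlation, exactly what the $Z$-measurement reads out, while the EPR ancilla's $X$-stabilizer guarantees that no $X$-type (phase) information is ever copied onto the measured qubits. A secondary care point is wording: because the state does change, the lemma must be phrased at the level of the check/syndrome observables (or of the induced data channel) rather than at the level of the data state itself.
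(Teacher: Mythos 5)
Your proof is correct and is essentially the paper's argument transplanted into the Heisenberg picture: where the paper classifies the effective bit-error-correction operations on the data as $I$, $\sigma_x$, and the CNOT-induced $\sigma_z$ back-action, and then uses $H^{\otimes 2}\ket{\Phi^+}=\ket{\Phi^+}$ to see that the $\sigma_z$ kicks occur in correlated Alice--Bob pairs that leave $\sigma_x\otimes\sigma_x$ invariant, you instead conjugate the $X$-type checks forward through the circuit and absorb the propagated $(\sigma_x)_\beta(\sigma_x)_{\beta'}$ factors into the ancilla EPR stabilizers --- the same fact about $\ket{\Phi^+}$ stated in the dual form $(\sigma_x\otimes\sigma_x)\ket{\Phi^+}=\ket{\Phi^+}$. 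One minor overstatement: the invariance holds for bilaterally symmetric $X$-type observables (products of the checks $(\sigma_x)_\alpha(\sigma_x)_{\alpha'}$), not literally for \emph{every} $X$-type Pauli on the data (e.g.\ $(\sigma_x)_\alpha$ on Alice's half alone picks up an incomplete ancilla operator of zero expectation), but since the lemma and the subsequent phase-syndrome operators only involve the pairwise checks, this does not affect the conclusion.
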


\begin{proof}
Let us first show that the phase-error measurement results are the same with or without the bit error correcting operations. The phase error is evaluated when both Alice and Bob perform the $X$-basis measurement on the data qubits, denoted by the measurement of the joint observable $\sigma_x\otimes\sigma_x$. From Figure \ref{fig:QECprocedure}(a), we can see that the bit error correcting operations on data qubits are essentially $I$, $\sigma_x$, and $\sigma_z$. The operations $I$ and $\sigma_x$ would not change the $X$-basis measurement outcomes. The operation $\sigma_z$ comes from the CNOT gate between the data and ancillary qubits. Since Alice's and Bob's hash functions are the same, the CNOT gates always appear in pairs. That is, if there is a CNOT between Alice's share of a data qubit pair and an ancillary EPR pair, there will be a CNOT between Bob's share of the data qubit pair and the ancillary EPR pair. From the circuit equivalency shown in Figure \ref{fig:PhaseMeasure}, we can see that the $\sigma_z$ operation always appear in pairs on Alice's and Bob's data qubits. That is, the $\sigma_z$ operation will simultaneously flip Alice's and Bob's $X$-basis measurement results, which leaves the measurement results of $\sigma_x\otimes\sigma_x$ on data qubit pairs unchanged. Therefore, quantum bit error correction does not affect phase errors.
	
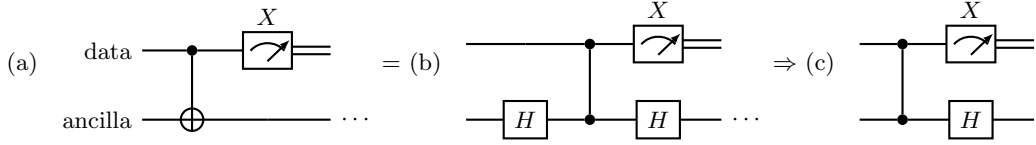
\begin{figure}[hbtp!]
		(a)
		\begin{quantikz}
			\lstick{data} \qw  & \ctrl{1}  & \meter{$X$} & \cw \\
			\lstick{ancilla} \qw  & \targ{} & \qw & \qw \rstick{$\cdots$}
		\end{quantikz}
		=
		(b)
		\begin{quantikz}
			& \qw & \ctrl{1} & \meter{$X$} & \cw \\
			& \gate{H} & \control{} & \gate{H} & \qw \rstick{$\cdots$}
		\end{quantikz}
		$\Rightarrow$
		(c)
		\begin{quantikz}
			& \ctrl{1} & \meter{$X$} & \cw \\
			& \control{} & \gate{H} & \qw
		\end{quantikz}
		\caption{A diagram showing the reason why bit error correction does not change phase errors. In the showcase, we take a pair of CNOT gates as an example. In the figures we only depict the circuit on Alice's side and the circuit is the same on Bob's side. Here, the control qubit is one of the data qubits and the target qubit is one of the ancillary qubits, as shown in Figure \ref{fig:BitEC}. The equivalence of (a), (b) and (c) comes from the facts that $\sigma_x=H\sigma_zH$ and $H^{\otimes 2}\ket{\Phi^+}=\ket{\Phi^+}$.} \label{fig:PhaseMeasure}
\end{figure}
	
With the duality between the $X$ and $Z$ bases, we can also prove that the phase-error correction would not affect bit errors with the same arguments.

\end{proof}

\section{Reduction from quantum operations to classical ones}\label{app:reduction}
Now, we want to reduce the quantum operations to classical ones, mainly following the Shor-Preskill security proof \cite{shor2000simple}. By classical, we mean that Alice and Bob can equivalently perform the operations after key measurement, which become classical data processing on key bits. The key idea is to move the final measurement before quantum error correction. Then, bit error correction becomes classical bilateral error correction and phase-error correction becomes privacy amplification. To do so, we need to make sure that the dephasing operation $\Delta_{Z^{\otimes n}}$ in Eq.~\eqref{eq:depahsingmeasure} caused by the final $Z$-basis measurement on each side commutes with all the quantum error correction operations.

Looking back at Figure \ref{fig:QECprocedure}(a), it is not difficult to see that bit error correction only consists of $I$, $\sigma_x$, and CNOT gates. These quantum operations commute with the dephasing operation $\Delta_{Z^{\otimes n}}$. Therefore, Alice and Bob can change the order of the $Z$-basis measurement and bit error correction. By doing so, quantum operations are replaced with corresponding classical operations on the measurement results, as shown in Figure \ref{fig:datareconcillation}. For example, the $\sigma_x$ gate becomes flip on classical bits and the CNOT gate becomes classical control-flip operation. Finally, quantum bit error correction becomes classical bilateral error correction, a typical information reconciliation method.

\begin{figure}[hbt!]
	\centering \includegraphics[width=12cm]{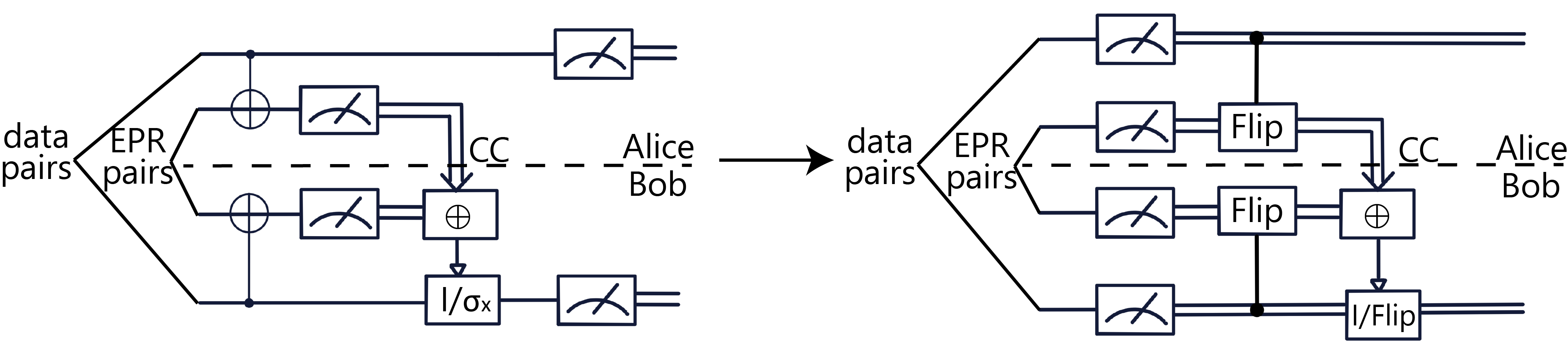}
	\caption{Transformation from quantum bit error correction to information reconciliation. The left circuit is the quantum bit error correction part in Figure \ref{fig:QECprocedure}. The ``Flip'' gate means Alice and Bob flip the corresponding classical bits. Bob calculates the error syndrome using his parity information and the one sent by Alice, and then flips his erroneous bits to reconcile the key.} \label{fig:datareconcillation}
\end{figure}

Quantum phase-error correction is more complicated because it contains one more operation, Hadamard gate, which does not commute with the dephasing operation, $\Delta_{Z^{\otimes n}}$. Here, we follow the idea of the Shor-Preskill security proof to reduce phase-error correction into classical privacy amplification \cite{shor2000simple}.

First, in the quantum circuit shown in Figure \ref{fig:QECprocedure}(a), we combine the Hadamard gate with the adjacent $Z$-basis key generation measurement together, which becomes the $X$-basis measurement. The phase error correcting operations on Bob's side is essentially $\sigma_x$, which does not affect the result of the $X$-basis measurement. Therefore, neither the correction operation nor the phase-error-syndrome measurement is necessary and the circuit of quantum phase-error correction becomes the one shown in Figure \ref{fig:QPECtoPA}(a).

\begin{figure}[hbt!]
\centering \includegraphics[width=18cm]{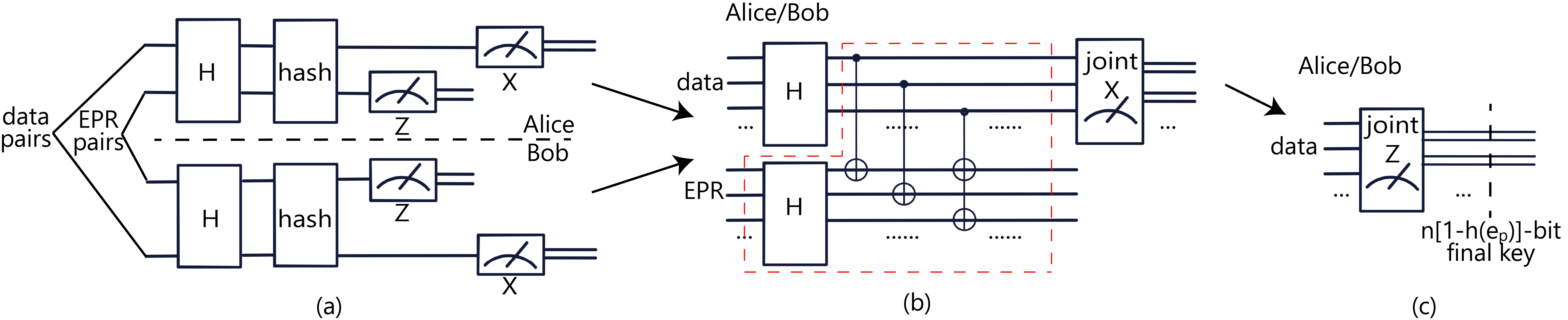}
\caption{(a) Reduced circuit from the phase-error correction part in Figure \ref{fig:QECprocedure} by considering the following two facts: Hadamard+$Z$-basis measurement = $X$-basis measurement; neither the identity nor the $\sigma_x$ gate affects the $X$-basis measurement. Now, Alice and Bob's circuits become the same. (b) Reduction of Alice's circuit: remove the measurement on ancillary qubits since the results do not affect the measurement on data qubits; replace individual $X$-basis measurements with joint $X$-basis measurements on data qubits; and explicitly express the hash operation with CNOT gates shown in Figure \ref{fig:QECprocedure}(b) as an example. If the joint $X$-basis measurements commute with the hash operation, the circuit in the red dashed box will not affect the measurements and hence can be removed. (c) Further reduction of Alice's circuit: remove the redundant circuit in the red dashed box and combine joint $X$-basis measurements with the Hadamard gates which become joint $Z$-basis measurements. Finally, Alice can get an $n[1-h(e_p)]$-bit secure key from the joint $Z$-basis measurements in a QKD session.} \label{fig:QPECtoPA}
\end{figure}

Second, since Alice and Bob's positions are symmetric now, we focus on Alice's side. We shall show that she can replace the individual key measurement with a series of joint $X$-basis measurements and still get an $[n-nh(e_p)]$-bit secure final key, and as shown in Figure \ref{fig:QPECtoPA}(b). The joint measurement can be represented by an $n$-bit vector $\vec{v}$, which measures the observable
\begin{equation}\label{eq:jointX}
O_{\vec{v},x} = \bigotimes_{i=1}^n \sigma_x^{v_{i}},
\end{equation}
where $v_i$ is the element values of $\vec{v}$. 
Here, we require $\vec{v}$s to be linearly independent so that the final key is still secure after phase-error correction. Note that since the key measurements differ from the ones in Figure~\ref{fig:QPECtoPA}(a), the obtained key bits may be different.

The effect of hash operations on the $n$ data qubits is either identity when the measurement outcome of the ancillary qubit is 0 or a serious of $n$-qubit operations consisting of $\sigma_z$ and $I$ when the outcome is 1. For example, the hash operation corresponding to the first row vector $\vec{m}_1$ of the matrix in Eq.~\eqref{Eq:Hashing} is an identity or
\begin{equation}
O_{\vec{m}_1,z}=\sigma_z \otimes I \otimes \sigma_z \otimes \sigma_z \otimes I \cdots.
\end{equation}
In general, we represent the operation corresponding to the $i$th row vector $\vec{m}_i$ of the matrix $M$ as follows,
\begin{equation} \label{eq:hashZ}
O_{\vec{m}_i,z}=\bigotimes_{j=1}^n \sigma_z^{M_{ij}},
\end{equation}
where $M_{ij}$ is the value of the element in the $i$th row and the $j$th column of the matrix. 

Normally, the operators in Eqs.~\eqref{eq:jointX} and \eqref{eq:hashZ} do not commute because $[\sigma_x,\sigma_z]\neq0$. Fortunately, we have $[\sigma_x\otimes\sigma_x,\sigma_z\otimes\sigma_z]=0$. To make sure that a joint $X$-basis measurement commutes with a serious of $Z$ operations, we only need to design the joint $X$-basis measurement such that the number of qubits having been applied with the $\sigma_z$ operation and measured in the $X$ basis is even. That is, $\vec{m}_i\cdot \vec{v}=0$ holds, $\forall i\in\{1,2,3,...,nh(e_p)\}$. This is equivalent to finding the kernel of $M$. The kernel of the matrix $M$ is defined as the set of vectors such that
\begin{equation}
\ker{M}=\{\vec{v}:M\vec{v}=\vec{0}\}.
\end{equation}
In phase-error correction, the hashing matrix we use has a full rank. Therefore, the rank of $\ker{M}$ is $n-nh(e_p)$, or, $\ker{M}$ can be constructed from $[n-nh(e_p)]$ linearly independent vectors. We arrange these vectors in columns and form a matrix, $V$, which is of the size $n\times[n-nh(e_p)]$. We call the matrix $V$ the dual matrix of $M$. Then, we can design joint $X$-basis measurements according to this dual matrix, where each joint measurement corresponds to a column vector of $V$ according to the correspondence in Eq.~\eqref{eq:jointX}. By construction, these joint measurements all commute with the hash operation.

Finally, the commuting property shows that the hash operation would not affect the results of the joint $X$-basis measurements. Then, Alice and Bob can remove the hash operation along with the ancillary qubits. The joint $X$-basis measurements can be further combined with the Hadamard gate and become the joint $Z$-basis measurements, as shown in Figure \ref{fig:QPECtoPA}(c). The measurement results of the joint $Z$-basis measurements give an $[n-nh(e_p)]$-bit secret key.

With the reduction, one can see that the ancillary EPR pairs are unnecessary for phase-error correction, as shown in Figure \ref{fig:QPECtoPA}(c). Does that mean, there is no cost of shared private randomness in phase-error correction? Unfortunately, the answer is no. The cost is reflected in the joint $Z$-basis measurement. The number of final key bits is determined by the number of joint $Z$-basis measurements, which is limited by the kernel of the hashing matrix for phase-error correction.

Meanwhile, the joint $Z$-basis measurement is compatible with bit error correction reduction. After considering the cost of shared private randomness in bit error correction, the net gain of secret key is $n[1-h(e_b)-h(e_p)]$ bits, which matches the key rate in Eq.~\eqref{eq:ShorPreskillR}.

There are a few notes on information reconciliation and privacy amplification.
\begin{enumerate}
\item
We take the family of Toeplitz matrices as an example here. The decoding for such random-Toeplitz-matrix hashing is computationally hard in practice. To bypass the hardness in information reconciliation, people use more practical error correcting codes, such as the low-density parity-check code \cite{gallager1962low,mackay1999good}. For privacy amplification, Alice and Bob can still adopt random-Toeplitz-matrix hashing, as the decoding is unnecessary for phase-error correction.
	
\item
In general, the failure probability of error correction $\varepsilon$ is a property of the hash family. But if Alice and Bob can perform error verification, the failure probability of error correction is determined by the verification process instead \cite{Fung2010Finite}. In this case, the hash function can be pre-fixed and known to Eve in the beginning. This is the case for information reconciliation. Unfortunately, there is no way to verify the result of privacy amplification because the virtual quantum phase-error correction is not implemented in real life.
	
\item
The reduction from quantum phase-error correction to privacy amplification is different from Shor-Preskill's original argument, where the Calderbank-Shor-Steane quantum error-correcting code is adopted.
\end{enumerate}

\section{GLLP framework}\label{app:examplewithGLLP}
In practical implementation, the devices may deviate from the ideal assumptions in the security analysis. The actual error rate may also vary from the prior estimation from random sampling due to statistical fluctuations. We need to take these issues into account in the security analysis, where they affect the secure key rate given in Eq.~\eqref{eq:ShorPreskillR}. Thus, we also need to modify the parameter settings of the stream privacy amplification scheme accordingly in practice. Here, we take the GLLP framework as an example to show how the stream scheme can be combined with existing approaches to handle the practical issues.

In Ref.~\cite{gottesman2004security}, Gottesman et al.~established a general framework for security analysis with realistic devices. In this framework, Alice and Bob characterize their devices to quantify the deviation from the ideal case. For this purpose, Alice and Bob can perform a virtual measurement on the devices, and then for each round, they tag the sifted key bit as ``good'' if the virtual measurement projects to the ideal case and ``bad'' if it is the orthogonal case. More generally, Alice and Bob can label sifted key bits with an arbitrary tag $g$ and derive the corresponding phase error rate $e^g_p$. Then we can get the extended GLLP key-rate formula~\cite{Ma2008PhD},
\begin{equation}\label{Eq:GLLP}
	r\geq -h(e_b)+\sum_gq_g\left[1-h(e^g_p)\right],
\end{equation}
where $e_b$ is the bit error rate and $\{q_g\}$ gives the ratio of the sifted key bits with tag $g$, satisfying $\sum_gq_g=1$ and $g_q\geq0,\forall g$. The first term on the right hand side of Eq.~\eqref{Eq:GLLP} represents the cost of information reconciliation and the second term corresponds to the ratio of privacy amplification.

With this formula, we can directly employ stream privacy amplification to the GLLP framework. Here, we still consider the case of performing information reconciliation first and then privacy amplification in the postprocessing. We denote the length of the sifted key string as $n$. If Alice and Bob perform encryption in information reconciliation, then in stream privacy amplification, the seed length is $n\sum_gq_g h(e^g_p)$ and the size of the hashing matrix is $\left[n\sum_gq_gh(e^g_p)\right]\times n$. As for the case where Alice and Bob do not perform encryption in information reconciliation, the seed length will be $n\left[h(e_b)+\sum_gq_g h(e^g_p)\right]$ and the size of the hashing matrix will be $\left[nh(e_b)+n\sum_gq_gh(e^g_p)\right]\times n$ in stream privacy amplification. The rest steps are the same as those in Box~\ref{box:streamPA}. 

As we can see from this example, the new scheme is highly portable. Given a QKD protocol in the presence of practical issues, once the users can analyze the amount of information leakage and obtain a good estimation, they can carry out privacy amplification in a stream way. Thus, the stream scheme is adequate to deal with practical issues.



\bibliographystyle{apsrev}

\bibliography{bibIntroQC}

\begin{thebibliography}{52}
\expandafter\ifx\csname natexlab\endcsname\relax\def\natexlab#1{#1}\fi
\expandafter\ifx\csname bibnamefont\endcsname\relax
  \def\bibnamefont#1{#1}\fi
\expandafter\ifx\csname bibfnamefont\endcsname\relax
  \def\bibfnamefont#1{#1}\fi
\expandafter\ifx\csname citenamefont\endcsname\relax
  \def\citenamefont#1{#1}\fi
\expandafter\ifx\csname url\endcsname\relax
  \def\url#1{\texttt{#1}}\fi
\expandafter\ifx\csname urlprefix\endcsname\relax\def\urlprefix{URL }\fi
\providecommand{\bibinfo}[2]{#2}
\providecommand{\eprint}[2][]{\url{#2}}

\bibitem[{\citenamefont{Bennett and Brassard}(1984)}]{bennett1984proceedings}
\bibinfo{author}{\bibfnamefont{C.~H.} \bibnamefont{Bennett}} \bibnamefont{and}
  \bibinfo{author}{\bibfnamefont{G.}~\bibnamefont{Brassard}}, in
  \emph{\bibinfo{booktitle}{Proceedings of the IEEE International Conference on
  Computers, Systems and Signal Processing}} (\bibinfo{publisher}{IEEE Press},
  \bibinfo{address}{New York}, \bibinfo{year}{1984}), pp.
  \bibinfo{pages}{175--179},
  \urlprefix\url{https://doi.org/10.1016/j.tcs.2014.05.025}.

\bibitem[{\citenamefont{Ekert}(1991)}]{ekert1991quantum}
\bibinfo{author}{\bibfnamefont{A.~K.} \bibnamefont{Ekert}},
  \bibinfo{journal}{Phys. Rev. Lett.} \textbf{\bibinfo{volume}{67}},
  \bibinfo{pages}{661} (\bibinfo{year}{1991}),
  \urlprefix\url{https://link.aps.org/doi/10.1103/PhysRevLett.67.661}.

\bibitem[{\citenamefont{Bennett et~al.}(1988)\citenamefont{Bennett, Brassard,
  and Robert}}]{bennett1988privacy}
\bibinfo{author}{\bibfnamefont{C.~H.} \bibnamefont{Bennett}},
  \bibinfo{author}{\bibfnamefont{G.}~\bibnamefont{Brassard}}, \bibnamefont{and}
  \bibinfo{author}{\bibfnamefont{J.-M.} \bibnamefont{Robert}},
  \bibinfo{journal}{SIAM J. Comput.} \textbf{\bibinfo{volume}{17}},
  \bibinfo{pages}{210} (\bibinfo{year}{1988}),
  \urlprefix\url{https://doi.org/10.1137/0217014}.

\bibitem[{\citenamefont{Maurer}(1993)}]{maurer1993secret}
\bibinfo{author}{\bibfnamefont{U.}~\bibnamefont{Maurer}},
  \bibinfo{journal}{IEEE Trans. Inf. Theory} \textbf{\bibinfo{volume}{39}},
  \bibinfo{pages}{733} (\bibinfo{year}{1993}).

\bibitem[{\citenamefont{Xu et~al.}(2020)\citenamefont{Xu, Ma, Zhang, Lo, and
  Pan}}]{xu2020secure}
\bibinfo{author}{\bibfnamefont{F.}~\bibnamefont{Xu}},
  \bibinfo{author}{\bibfnamefont{X.}~\bibnamefont{Ma}},
  \bibinfo{author}{\bibfnamefont{Q.}~\bibnamefont{Zhang}},
  \bibinfo{author}{\bibfnamefont{H.-K.} \bibnamefont{Lo}}, \bibnamefont{and}
  \bibinfo{author}{\bibfnamefont{J.-W.} \bibnamefont{Pan}},
  \bibinfo{journal}{Rev. Mod. Phys.} \textbf{\bibinfo{volume}{92}},
  \bibinfo{pages}{025002} (\bibinfo{year}{2020}),
  \urlprefix\url{https://link.aps.org/doi/10.1103/RevModPhys.92.025002}.

\bibitem[{\citenamefont{Chen et~al.}(2020)\citenamefont{Chen, Zhang, Liu,
  Jiang, Zhang, Hu, Guan, Yu, Xu, Lin et~al.}}]{chen2020sending}
\bibinfo{author}{\bibfnamefont{J.-P.} \bibnamefont{Chen}},
  \bibinfo{author}{\bibfnamefont{C.}~\bibnamefont{Zhang}},
  \bibinfo{author}{\bibfnamefont{Y.}~\bibnamefont{Liu}},
  \bibinfo{author}{\bibfnamefont{C.}~\bibnamefont{Jiang}},
  \bibinfo{author}{\bibfnamefont{W.}~\bibnamefont{Zhang}},
  \bibinfo{author}{\bibfnamefont{X.-L.} \bibnamefont{Hu}},
  \bibinfo{author}{\bibfnamefont{J.-Y.} \bibnamefont{Guan}},
  \bibinfo{author}{\bibfnamefont{Z.-W.} \bibnamefont{Yu}},
  \bibinfo{author}{\bibfnamefont{H.}~\bibnamefont{Xu}},
  \bibinfo{author}{\bibfnamefont{J.}~\bibnamefont{Lin}}, \bibnamefont{et~al.},
  \bibinfo{journal}{Phys. Rev. Lett.} \textbf{\bibinfo{volume}{124}},
  \bibinfo{pages}{070501} (\bibinfo{year}{2020}),
  \urlprefix\url{https://link.aps.org/doi/10.1103/PhysRevLett.124.070501}.

\bibitem[{\citenamefont{Fang et~al.}(2020)\citenamefont{Fang, Zeng, Liu, Zou,
  Wu, Tang, Sheng, Xiang, Zhang, Li et~al.}}]{fang2020implementation}
\bibinfo{author}{\bibfnamefont{X.-T.} \bibnamefont{Fang}},
  \bibinfo{author}{\bibfnamefont{P.}~\bibnamefont{Zeng}},
  \bibinfo{author}{\bibfnamefont{H.}~\bibnamefont{Liu}},
  \bibinfo{author}{\bibfnamefont{M.}~\bibnamefont{Zou}},
  \bibinfo{author}{\bibfnamefont{W.}~\bibnamefont{Wu}},
  \bibinfo{author}{\bibfnamefont{Y.-L.} \bibnamefont{Tang}},
  \bibinfo{author}{\bibfnamefont{Y.-J.} \bibnamefont{Sheng}},
  \bibinfo{author}{\bibfnamefont{Y.}~\bibnamefont{Xiang}},
  \bibinfo{author}{\bibfnamefont{W.}~\bibnamefont{Zhang}},
  \bibinfo{author}{\bibfnamefont{H.}~\bibnamefont{Li}}, \bibnamefont{et~al.},
  \bibinfo{journal}{Nature Photonics} \textbf{\bibinfo{volume}{14}},
  \bibinfo{pages}{422} (\bibinfo{year}{2020}).

\bibitem[{\citenamefont{Liao et~al.}(2018)\citenamefont{Liao, Cai, Handsteiner,
  Liu, Yin, Zhang, Rauch, Fink, Ren, Liu et~al.}}]{liao2018satellite}
\bibinfo{author}{\bibfnamefont{S.-K.} \bibnamefont{Liao}},
  \bibinfo{author}{\bibfnamefont{W.-Q.} \bibnamefont{Cai}},
  \bibinfo{author}{\bibfnamefont{J.}~\bibnamefont{Handsteiner}},
  \bibinfo{author}{\bibfnamefont{B.}~\bibnamefont{Liu}},
  \bibinfo{author}{\bibfnamefont{J.}~\bibnamefont{Yin}},
  \bibinfo{author}{\bibfnamefont{L.}~\bibnamefont{Zhang}},
  \bibinfo{author}{\bibfnamefont{D.}~\bibnamefont{Rauch}},
  \bibinfo{author}{\bibfnamefont{M.}~\bibnamefont{Fink}},
  \bibinfo{author}{\bibfnamefont{J.-G.} \bibnamefont{Ren}},
  \bibinfo{author}{\bibfnamefont{W.-Y.} \bibnamefont{Liu}},
  \bibnamefont{et~al.}, \bibinfo{journal}{Phys. Rev. Lett.}
  \textbf{\bibinfo{volume}{120}}, \bibinfo{pages}{030501}
  (\bibinfo{year}{2018}),
  \urlprefix\url{https://link.aps.org/doi/10.1103/PhysRevLett.120.030501}.

\bibitem[{\citenamefont{Islam et~al.}(2017)\citenamefont{Islam, Lim, Cahall,
  Kim, and Gauthier}}]{islam2017provably}
\bibinfo{author}{\bibfnamefont{N.~T.} \bibnamefont{Islam}},
  \bibinfo{author}{\bibfnamefont{C.~C.~W.} \bibnamefont{Lim}},
  \bibinfo{author}{\bibfnamefont{C.}~\bibnamefont{Cahall}},
  \bibinfo{author}{\bibfnamefont{J.}~\bibnamefont{Kim}}, \bibnamefont{and}
  \bibinfo{author}{\bibfnamefont{D.~J.} \bibnamefont{Gauthier}},
  \bibinfo{journal}{Sci. Adv.} \textbf{\bibinfo{volume}{3}}
  (\bibinfo{year}{2017}),
  \urlprefix\url{https://advances.sciencemag.org/content/3/11/e1701491}.

\bibitem[{\citenamefont{Elliott et~al.}(2005)\citenamefont{Elliott, Colvin,
  Pearson, Pikalo, Schlafer, and Yeh}}]{elliott2005current}
\bibinfo{author}{\bibfnamefont{C.}~\bibnamefont{Elliott}},
  \bibinfo{author}{\bibfnamefont{A.}~\bibnamefont{Colvin}},
  \bibinfo{author}{\bibfnamefont{D.}~\bibnamefont{Pearson}},
  \bibinfo{author}{\bibfnamefont{O.}~\bibnamefont{Pikalo}},
  \bibinfo{author}{\bibfnamefont{J.}~\bibnamefont{Schlafer}}, \bibnamefont{and}
  \bibinfo{author}{\bibfnamefont{H.}~\bibnamefont{Yeh}}, in
  \emph{\bibinfo{booktitle}{Quantum Information and computation III}}
  (\bibinfo{organization}{International Society for Optics and Photonics},
  \bibinfo{year}{2005}), vol. \bibinfo{volume}{5815}, pp.
  \bibinfo{pages}{138--149}, \urlprefix\url{https://doi.org/10.1117/12.606489}.

\bibitem[{\citenamefont{Peev et~al.}(2009)\citenamefont{Peev, Pacher,
  All{\'{e}}aume, Barreiro, Bouda, Boxleitner, Debuisschert, Diamanti, Dianati,
  Dynes et~al.}}]{peev2009secoqc}
\bibinfo{author}{\bibfnamefont{M.}~\bibnamefont{Peev}},
  \bibinfo{author}{\bibfnamefont{C.}~\bibnamefont{Pacher}},
  \bibinfo{author}{\bibfnamefont{R.}~\bibnamefont{All{\'{e}}aume}},
  \bibinfo{author}{\bibfnamefont{C.}~\bibnamefont{Barreiro}},
  \bibinfo{author}{\bibfnamefont{J.}~\bibnamefont{Bouda}},
  \bibinfo{author}{\bibfnamefont{W.}~\bibnamefont{Boxleitner}},
  \bibinfo{author}{\bibfnamefont{T.}~\bibnamefont{Debuisschert}},
  \bibinfo{author}{\bibfnamefont{E.}~\bibnamefont{Diamanti}},
  \bibinfo{author}{\bibfnamefont{M.}~\bibnamefont{Dianati}},
  \bibinfo{author}{\bibfnamefont{J.~F.} \bibnamefont{Dynes}},
  \bibnamefont{et~al.}, \bibinfo{journal}{New J. Phys.}
  \textbf{\bibinfo{volume}{11}}, \bibinfo{pages}{075001}
  (\bibinfo{year}{2009}),
  \urlprefix\url{https://doi.org/10.1088/1367-2630/11/7/075001}.

\bibitem[{\citenamefont{Stucki et~al.}(2011)\citenamefont{Stucki, Legre,
  Buntschu, Clausen, Felber, Gisin, Henzen, Junod, Litzistorf, Monbaron
  et~al.}}]{stucki2011long}
\bibinfo{author}{\bibfnamefont{D.}~\bibnamefont{Stucki}},
  \bibinfo{author}{\bibfnamefont{M.}~\bibnamefont{Legre}},
  \bibinfo{author}{\bibfnamefont{F.}~\bibnamefont{Buntschu}},
  \bibinfo{author}{\bibfnamefont{B.}~\bibnamefont{Clausen}},
  \bibinfo{author}{\bibfnamefont{N.}~\bibnamefont{Felber}},
  \bibinfo{author}{\bibfnamefont{N.}~\bibnamefont{Gisin}},
  \bibinfo{author}{\bibfnamefont{L.}~\bibnamefont{Henzen}},
  \bibinfo{author}{\bibfnamefont{P.}~\bibnamefont{Junod}},
  \bibinfo{author}{\bibfnamefont{G.}~\bibnamefont{Litzistorf}},
  \bibinfo{author}{\bibfnamefont{P.}~\bibnamefont{Monbaron}},
  \bibnamefont{et~al.}, \bibinfo{journal}{New J. Phys.}
  \textbf{\bibinfo{volume}{13}}, \bibinfo{pages}{123001}
  (\bibinfo{year}{2011}),
  \urlprefix\url{https://doi.org/10.1088/1367-2630/13/12/123001}.

\bibitem[{\citenamefont{Sasaki et~al.}(2011)\citenamefont{Sasaki, Fujiwara,
  Ishizuka, Klaus, Wakui, Takeoka, Miki, Yamashita, Wang, Tanaka
  et~al.}}]{sasaki2011field}
\bibinfo{author}{\bibfnamefont{M.}~\bibnamefont{Sasaki}},
  \bibinfo{author}{\bibfnamefont{M.}~\bibnamefont{Fujiwara}},
  \bibinfo{author}{\bibfnamefont{H.}~\bibnamefont{Ishizuka}},
  \bibinfo{author}{\bibfnamefont{W.}~\bibnamefont{Klaus}},
  \bibinfo{author}{\bibfnamefont{K.}~\bibnamefont{Wakui}},
  \bibinfo{author}{\bibfnamefont{M.}~\bibnamefont{Takeoka}},
  \bibinfo{author}{\bibfnamefont{S.}~\bibnamefont{Miki}},
  \bibinfo{author}{\bibfnamefont{T.}~\bibnamefont{Yamashita}},
  \bibinfo{author}{\bibfnamefont{Z.}~\bibnamefont{Wang}},
  \bibinfo{author}{\bibfnamefont{A.}~\bibnamefont{Tanaka}},
  \bibnamefont{et~al.}, \bibinfo{journal}{Opt. Express}
  \textbf{\bibinfo{volume}{19}}, \bibinfo{pages}{10387} (\bibinfo{year}{2011}),
  \urlprefix\url{http://www.opticsexpress.org/abstract.cfm?URI=oe-19-11-10387}.

\bibitem[{\citenamefont{Chen et~al.}(2009)\citenamefont{Chen, Liang, Liu, Cai,
  Ju, Liu, Wang, Yin, Chen, Chen et~al.}}]{chen2009field}
\bibinfo{author}{\bibfnamefont{T.-Y.} \bibnamefont{Chen}},
  \bibinfo{author}{\bibfnamefont{H.}~\bibnamefont{Liang}},
  \bibinfo{author}{\bibfnamefont{Y.}~\bibnamefont{Liu}},
  \bibinfo{author}{\bibfnamefont{W.-Q.} \bibnamefont{Cai}},
  \bibinfo{author}{\bibfnamefont{L.}~\bibnamefont{Ju}},
  \bibinfo{author}{\bibfnamefont{W.-Y.} \bibnamefont{Liu}},
  \bibinfo{author}{\bibfnamefont{J.}~\bibnamefont{Wang}},
  \bibinfo{author}{\bibfnamefont{H.}~\bibnamefont{Yin}},
  \bibinfo{author}{\bibfnamefont{K.}~\bibnamefont{Chen}},
  \bibinfo{author}{\bibfnamefont{Z.-B.} \bibnamefont{Chen}},
  \bibnamefont{et~al.}, \bibinfo{journal}{Opt. Express}
  \textbf{\bibinfo{volume}{17}}, \bibinfo{pages}{6540} (\bibinfo{year}{2009}),
  \urlprefix\url{http://www.opticsexpress.org/abstract.cfm?URI=oe-17-8-6540}.

\bibitem[{\citenamefont{Wang et~al.}(2010)\citenamefont{Wang, Chen, Yin, Zhang,
  Zhang, Li, Xu, Zhou, Yang, Huang et~al.}}]{wang2010field}
\bibinfo{author}{\bibfnamefont{S.}~\bibnamefont{Wang}},
  \bibinfo{author}{\bibfnamefont{W.}~\bibnamefont{Chen}},
  \bibinfo{author}{\bibfnamefont{Z.-Q.} \bibnamefont{Yin}},
  \bibinfo{author}{\bibfnamefont{Y.}~\bibnamefont{Zhang}},
  \bibinfo{author}{\bibfnamefont{T.}~\bibnamefont{Zhang}},
  \bibinfo{author}{\bibfnamefont{H.-W.} \bibnamefont{Li}},
  \bibinfo{author}{\bibfnamefont{F.-X.} \bibnamefont{Xu}},
  \bibinfo{author}{\bibfnamefont{Z.}~\bibnamefont{Zhou}},
  \bibinfo{author}{\bibfnamefont{Y.}~\bibnamefont{Yang}},
  \bibinfo{author}{\bibfnamefont{D.-J.} \bibnamefont{Huang}},
  \bibnamefont{et~al.}, \bibinfo{journal}{Opt. Lett.}
  \textbf{\bibinfo{volume}{35}}, \bibinfo{pages}{2454} (\bibinfo{year}{2010}),
  \urlprefix\url{http://ol.osa.org/abstract.cfm?URI=ol-35-14-2454}.

\bibitem[{\citenamefont{Chen et~al.}(2021{\natexlab{a}})\citenamefont{Chen,
  Zhang, Chen, Cai, Liao, Zhang, Chen, Yin, Ren, Chen
  et~al.}}]{chen2021integrated}
\bibinfo{author}{\bibfnamefont{Y.-A.} \bibnamefont{Chen}},
  \bibinfo{author}{\bibfnamefont{Q.}~\bibnamefont{Zhang}},
  \bibinfo{author}{\bibfnamefont{T.-Y.} \bibnamefont{Chen}},
  \bibinfo{author}{\bibfnamefont{W.-Q.} \bibnamefont{Cai}},
  \bibinfo{author}{\bibfnamefont{S.-K.} \bibnamefont{Liao}},
  \bibinfo{author}{\bibfnamefont{J.}~\bibnamefont{Zhang}},
  \bibinfo{author}{\bibfnamefont{K.}~\bibnamefont{Chen}},
  \bibinfo{author}{\bibfnamefont{J.}~\bibnamefont{Yin}},
  \bibinfo{author}{\bibfnamefont{J.-G.} \bibnamefont{Ren}},
  \bibinfo{author}{\bibfnamefont{Z.}~\bibnamefont{Chen}}, \bibnamefont{et~al.},
  \bibinfo{journal}{Nature} \textbf{\bibinfo{volume}{589}},
  \bibinfo{pages}{214} (\bibinfo{year}{2021}{\natexlab{a}}),
  \urlprefix\url{https://www.nature.com/articles/s41586-020-03093-8}.

\bibitem[{\citenamefont{Qiu}(2014)}]{qiu2014quantum}
\bibinfo{author}{\bibfnamefont{J.}~\bibnamefont{Qiu}}, \bibinfo{journal}{Nature
  News} \textbf{\bibinfo{volume}{508}}, \bibinfo{pages}{441}
  (\bibinfo{year}{2014}),
  \urlprefix\url{https://www.nature.com/articles/508441a}.

\bibitem[{\citenamefont{Fung et~al.}(2010)\citenamefont{Fung, Ma, and
  Chau}}]{Fung2010Finite}
\bibinfo{author}{\bibfnamefont{C.-H.~F.} \bibnamefont{Fung}},
  \bibinfo{author}{\bibfnamefont{X.}~\bibnamefont{Ma}}, \bibnamefont{and}
  \bibinfo{author}{\bibfnamefont{H.~F.} \bibnamefont{Chau}},
  \bibinfo{journal}{Phys. Rev. A} \textbf{\bibinfo{volume}{81}},
  \bibinfo{pages}{012318} (\bibinfo{year}{2010}),
  \urlprefix\url{https://link.aps.org/doi/10.1103/PhysRevA.81.012318}.

\bibitem[{\citenamefont{Liao et~al.}(2017)\citenamefont{Liao, Lin, Ren, Liu,
  Qiang, Yin, Li, Shen, Zhang, Liang et~al.}}]{Liao2017Space}
\bibinfo{author}{\bibfnamefont{S.-K.} \bibnamefont{Liao}},
  \bibinfo{author}{\bibfnamefont{J.}~\bibnamefont{Lin}},
  \bibinfo{author}{\bibfnamefont{J.-G.} \bibnamefont{Ren}},
  \bibinfo{author}{\bibfnamefont{W.-Y.} \bibnamefont{Liu}},
  \bibinfo{author}{\bibfnamefont{J.}~\bibnamefont{Qiang}},
  \bibinfo{author}{\bibfnamefont{J.}~\bibnamefont{Yin}},
  \bibinfo{author}{\bibfnamefont{Y.}~\bibnamefont{Li}},
  \bibinfo{author}{\bibfnamefont{Q.}~\bibnamefont{Shen}},
  \bibinfo{author}{\bibfnamefont{L.}~\bibnamefont{Zhang}},
  \bibinfo{author}{\bibfnamefont{X.-F.} \bibnamefont{Liang}},
  \bibnamefont{et~al.}, \bibinfo{journal}{Chinese Phys. Lett.}
  \textbf{\bibinfo{volume}{34}}, \bibinfo{eid}{090302}
  (pages~\bibinfo{numpages}{0}) (\bibinfo{year}{2017}),
  \urlprefix\url{http://cpl.iphy.ac.cn/EN/abstract/article_70915.shtml}.

\bibitem[{\citenamefont{Liu et~al.}(2018)\citenamefont{Liu, Zhao, Li, Guan,
  Zhang, Bai, Zhang, Liu, Wu, Yuan et~al.}}]{liu2018device}
\bibinfo{author}{\bibfnamefont{Y.}~\bibnamefont{Liu}},
  \bibinfo{author}{\bibfnamefont{Q.}~\bibnamefont{Zhao}},
  \bibinfo{author}{\bibfnamefont{M.-H.} \bibnamefont{Li}},
  \bibinfo{author}{\bibfnamefont{J.-Y.} \bibnamefont{Guan}},
  \bibinfo{author}{\bibfnamefont{Y.}~\bibnamefont{Zhang}},
  \bibinfo{author}{\bibfnamefont{B.}~\bibnamefont{Bai}},
  \bibinfo{author}{\bibfnamefont{W.}~\bibnamefont{Zhang}},
  \bibinfo{author}{\bibfnamefont{W.-Z.} \bibnamefont{Liu}},
  \bibinfo{author}{\bibfnamefont{C.}~\bibnamefont{Wu}},
  \bibinfo{author}{\bibfnamefont{X.}~\bibnamefont{Yuan}}, \bibnamefont{et~al.},
  \bibinfo{journal}{Nature} \textbf{\bibinfo{volume}{562}},
  \bibinfo{pages}{548} (\bibinfo{year}{2018}),
  \urlprefix\url{https://www.nature.com/articles/s41586%20018%200559%203}.

\bibitem[{\citenamefont{Bennett et~al.}(1995)\citenamefont{Bennett, Brassard,
  Cr{\'e}peau, and Maurer}}]{bennett1995generalized}
\bibinfo{author}{\bibfnamefont{C.}~\bibnamefont{Bennett}},
  \bibinfo{author}{\bibfnamefont{G.}~\bibnamefont{Brassard}},
  \bibinfo{author}{\bibfnamefont{C.}~\bibnamefont{Cr{\'e}peau}},
  \bibnamefont{and} \bibinfo{author}{\bibfnamefont{U.}~\bibnamefont{Maurer}},
  \bibinfo{journal}{IEEE Trans. Inf. Theory} \textbf{\bibinfo{volume}{41}},
  \bibinfo{pages}{1915} (\bibinfo{year}{1995}),
  \urlprefix\url{https://doi.org/10.1109/18.476316}.

\bibitem[{\citenamefont{Ma et~al.}(2013)\citenamefont{Ma, Xu, Xu, Tan, Qi, and
  Lo}}]{Ma2013Extractor}
\bibinfo{author}{\bibfnamefont{X.}~\bibnamefont{Ma}},
  \bibinfo{author}{\bibfnamefont{F.}~\bibnamefont{Xu}},
  \bibinfo{author}{\bibfnamefont{H.}~\bibnamefont{Xu}},
  \bibinfo{author}{\bibfnamefont{X.}~\bibnamefont{Tan}},
  \bibinfo{author}{\bibfnamefont{B.}~\bibnamefont{Qi}}, \bibnamefont{and}
  \bibinfo{author}{\bibfnamefont{H.-K.} \bibnamefont{Lo}},
  \bibinfo{journal}{Phys. Rev. A} \textbf{\bibinfo{volume}{87}},
  \bibinfo{pages}{062327} (\bibinfo{year}{2013}),
  \urlprefix\url{http://link.aps.org/doi/10.1103/PhysRevA.87.062327}.

\bibitem[{\citenamefont{Hayashi and Tsurumaru}(2016)}]{hayashi2016more}
\bibinfo{author}{\bibfnamefont{M.}~\bibnamefont{Hayashi}} \bibnamefont{and}
  \bibinfo{author}{\bibfnamefont{T.}~\bibnamefont{Tsurumaru}},
  \bibinfo{journal}{IEEE Trans. Inf. Theory} \textbf{\bibinfo{volume}{62}},
  \bibinfo{pages}{2213} (\bibinfo{year}{2016}),
  \urlprefix\url{https://doi.org/10.1109/TIT.2016.2526018}.

\bibitem[{\citenamefont{Chen et~al.}(2021{\natexlab{b}})\citenamefont{Chen,
  Jiang, Tang, Zhou, Yuan, Zhou, Wang, Liu, Chen, Liu
  et~al.}}]{Chen2021implementation}
\bibinfo{author}{\bibfnamefont{T.-Y.} \bibnamefont{Chen}},
  \bibinfo{author}{\bibfnamefont{X.}~\bibnamefont{Jiang}},
  \bibinfo{author}{\bibfnamefont{S.-B.} \bibnamefont{Tang}},
  \bibinfo{author}{\bibfnamefont{L.}~\bibnamefont{Zhou}},
  \bibinfo{author}{\bibfnamefont{X.}~\bibnamefont{Yuan}},
  \bibinfo{author}{\bibfnamefont{H.}~\bibnamefont{Zhou}},
  \bibinfo{author}{\bibfnamefont{J.}~\bibnamefont{Wang}},
  \bibinfo{author}{\bibfnamefont{Y.}~\bibnamefont{Liu}},
  \bibinfo{author}{\bibfnamefont{L.-K.} \bibnamefont{Chen}},
  \bibinfo{author}{\bibfnamefont{W.-Y.} \bibnamefont{Liu}},
  \bibnamefont{et~al.}, \bibinfo{journal}{NPJ Quantum Inf.}
  \textbf{\bibinfo{volume}{7}}, \bibinfo{pages}{134}
  (\bibinfo{year}{2021}{\natexlab{b}}), ISSN \bibinfo{issn}{2056-6387},
  \urlprefix\url{https://doi.org/10.1038/s41534-021-00474-3}.

\bibitem[{\citenamefont{Zhou et~al.}(2022)\citenamefont{Zhou, Lv, Huang, and
  Ma}}]{zhou2019security}
\bibinfo{author}{\bibfnamefont{H.}~\bibnamefont{Zhou}},
  \bibinfo{author}{\bibfnamefont{K.}~\bibnamefont{Lv}},
  \bibinfo{author}{\bibfnamefont{L.}~\bibnamefont{Huang}}, \bibnamefont{and}
  \bibinfo{author}{\bibfnamefont{X.}~\bibnamefont{Ma}},
  \bibinfo{journal}{IEEE/ACM Transactions on Networking}
  \textbf{\bibinfo{volume}{30}}, \bibinfo{pages}{1328} (\bibinfo{year}{2022}),
  \urlprefix\url{https://doi.org/10.1109/TNET.2021.3136943}.

\bibitem[{\citenamefont{L{\"u}tkenhaus and Ma}(2016)}]{lutkenhaus2016system}
\bibinfo{author}{\bibfnamefont{N.}~\bibnamefont{L{\"u}tkenhaus}}
  \bibnamefont{and} \bibinfo{author}{\bibfnamefont{X.}~\bibnamefont{Ma}},
  \emph{\bibinfo{title}{System and method for quantum key distribution}}
  (\bibinfo{year}{2016}), \bibinfo{note}{uS Patent 9,294,272}.

\bibitem[{\citenamefont{Lo and Chau}(1999)}]{lo1999unconditional}
\bibinfo{author}{\bibfnamefont{H.~K.} \bibnamefont{Lo}} \bibnamefont{and}
  \bibinfo{author}{\bibfnamefont{H.~F.} \bibnamefont{Chau}},
  \bibinfo{journal}{Science} \textbf{\bibinfo{volume}{283}},
  \bibinfo{pages}{2050} (\bibinfo{year}{1999}),
  \urlprefix\url{http://science.sciencemag.org/content/283/5410/2050}.

\bibitem[{\citenamefont{Shor and Preskill}(2000)}]{shor2000simple}
\bibinfo{author}{\bibfnamefont{P.~W.} \bibnamefont{Shor}} \bibnamefont{and}
  \bibinfo{author}{\bibfnamefont{J.}~\bibnamefont{Preskill}},
  \bibinfo{journal}{Phys. Rev. Lett.} \textbf{\bibinfo{volume}{85}},
  \bibinfo{pages}{441} (\bibinfo{year}{2000}),
  \urlprefix\url{https://link.aps.org/doi/10.1103/PhysRevLett.85.441}.

\bibitem[{\citenamefont{Bennett
  et~al.}(1992{\natexlab{a}})\citenamefont{Bennett, Brassard, and
  Mermin}}]{bennett1992quantum}
\bibinfo{author}{\bibfnamefont{C.~H.} \bibnamefont{Bennett}},
  \bibinfo{author}{\bibfnamefont{G.}~\bibnamefont{Brassard}}, \bibnamefont{and}
  \bibinfo{author}{\bibfnamefont{N.~D.} \bibnamefont{Mermin}},
  \bibinfo{journal}{Phys. Rev. Lett.} \textbf{\bibinfo{volume}{68}},
  \bibinfo{pages}{557} (\bibinfo{year}{1992}{\natexlab{a}}),
  \urlprefix\url{https://link.aps.org/doi/10.1103/PhysRevLett.68.557}.

\bibitem[{\citenamefont{Ben-Or et~al.}(2005)\citenamefont{Ben-Or, Horodecki,
  Leung, Mayers, and Oppenheim}}]{ben2005universal}
\bibinfo{author}{\bibfnamefont{M.}~\bibnamefont{Ben-Or}},
  \bibinfo{author}{\bibfnamefont{M.}~\bibnamefont{Horodecki}},
  \bibinfo{author}{\bibfnamefont{D.~W.} \bibnamefont{Leung}},
  \bibinfo{author}{\bibfnamefont{D.}~\bibnamefont{Mayers}}, \bibnamefont{and}
  \bibinfo{author}{\bibfnamefont{J.}~\bibnamefont{Oppenheim}}, in
  \emph{\bibinfo{booktitle}{Proceedings of the Second International Conference
  on Theory of Cryptography}} (\bibinfo{publisher}{Springer-Verlag},
  \bibinfo{address}{Berlin, Heidelberg}, \bibinfo{year}{2005}), TCC'05, pp.
  \bibinfo{pages}{386--406}, ISBN \bibinfo{isbn}{3-540-24573-1,
  978-3-540-24573-5},
  \urlprefix\url{http://dx.doi.org/10.1007/978-3-540-30576-7_21}.

\bibitem[{\citenamefont{Renner and K\"{o}nig}(2005)}]{Renner2005Security}
\bibinfo{author}{\bibfnamefont{R.}~\bibnamefont{Renner}} \bibnamefont{and}
  \bibinfo{author}{\bibfnamefont{R.}~\bibnamefont{K\"{o}nig}}, in
  \emph{\bibinfo{booktitle}{Proceedings of the Second International Conference
  on Theory of Cryptography}} (\bibinfo{publisher}{Springer-Verlag},
  \bibinfo{address}{Berlin, Heidelberg}, \bibinfo{year}{2005}), TCC'05, pp.
  \bibinfo{pages}{407--425}, ISBN \bibinfo{isbn}{3-540-24573-1,
  978-3-540-24573-5},
  \urlprefix\url{http://dx.doi.org/10.1007/978-3-540-30576-7_22}.

\bibitem[{\citenamefont{Bennett et~al.}(1996)\citenamefont{Bennett, DiVincenzo,
  Smolin, and Wootters}}]{Bennett96Mixed}
\bibinfo{author}{\bibfnamefont{C.~H.} \bibnamefont{Bennett}},
  \bibinfo{author}{\bibfnamefont{D.~P.} \bibnamefont{DiVincenzo}},
  \bibinfo{author}{\bibfnamefont{J.~A.} \bibnamefont{Smolin}},
  \bibnamefont{and} \bibinfo{author}{\bibfnamefont{W.~K.}
  \bibnamefont{Wootters}}, \bibinfo{journal}{Phys. Rev. A}
  \textbf{\bibinfo{volume}{54}}, \bibinfo{pages}{3824} (\bibinfo{year}{1996}),
  \urlprefix\url{https://link.aps.org/doi/10.1103/PhysRevA.54.3824}.

\bibitem[{\citenamefont{Ma et~al.}(2011{\natexlab{a}})\citenamefont{Ma, Zhang,
  and Tan}}]{ma2011explicit}
\bibinfo{author}{\bibfnamefont{X.}~\bibnamefont{Ma}},
  \bibinfo{author}{\bibfnamefont{Z.}~\bibnamefont{Zhang}}, \bibnamefont{and}
  \bibinfo{author}{\bibfnamefont{X.}~\bibnamefont{Tan}},
  \bibinfo{journal}{arXiv preprint arXiv:1109.6147}
  (\bibinfo{year}{2011}{\natexlab{a}}),
  \urlprefix\url{https://arxiv.org/abs/1109.6147}.

\bibitem[{\citenamefont{Yang et~al.}(2019)\citenamefont{Yang, Horodecki, and
  Winter}}]{yang2019distributed}
\bibinfo{author}{\bibfnamefont{D.}~\bibnamefont{Yang}},
  \bibinfo{author}{\bibfnamefont{K.}~\bibnamefont{Horodecki}},
  \bibnamefont{and} \bibinfo{author}{\bibfnamefont{A.}~\bibnamefont{Winter}},
  \bibinfo{journal}{Phys. Rev. Lett.} \textbf{\bibinfo{volume}{123}},
  \bibinfo{pages}{170501} (\bibinfo{year}{2019}),
  \urlprefix\url{https://link.aps.org/doi/10.1103/PhysRevLett.123.170501}.

\bibitem[{\citenamefont{Fung et~al.}(2012)\citenamefont{Fung, Ma, Chau, and
  Cai}}]{fung2012quantum}
\bibinfo{author}{\bibfnamefont{C.-H.~F.} \bibnamefont{Fung}},
  \bibinfo{author}{\bibfnamefont{X.}~\bibnamefont{Ma}},
  \bibinfo{author}{\bibfnamefont{H.~F.} \bibnamefont{Chau}}, \bibnamefont{and}
  \bibinfo{author}{\bibfnamefont{Q.-y.} \bibnamefont{Cai}},
  \bibinfo{journal}{Phys. Rev. A} \textbf{\bibinfo{volume}{85}},
  \bibinfo{pages}{032308} (\bibinfo{year}{2012}),
  \urlprefix\url{https://link.aps.org/doi/10.1103/PhysRevA.85.032308}.

\bibitem[{\citenamefont{Stacey et~al.}(2015)\citenamefont{Stacey, Annabestani,
  Ma, and L\"utkenhaus}}]{Stacey2015Relay}
\bibinfo{author}{\bibfnamefont{W.}~\bibnamefont{Stacey}},
  \bibinfo{author}{\bibfnamefont{R.}~\bibnamefont{Annabestani}},
  \bibinfo{author}{\bibfnamefont{X.}~\bibnamefont{Ma}}, \bibnamefont{and}
  \bibinfo{author}{\bibfnamefont{N.}~\bibnamefont{L\"utkenhaus}},
  \bibinfo{journal}{Phys. Rev. A} \textbf{\bibinfo{volume}{91}},
  \bibinfo{pages}{012338} (\bibinfo{year}{2015}),
  \urlprefix\url{http://link.aps.org/doi/10.1103/PhysRevA.91.012338}.

\bibitem[{\citenamefont{Alon et~al.}(1992)\citenamefont{Alon, Goldreich,
  H{\aa}stad, and Peralta}}]{alon1992simple}
\bibinfo{author}{\bibfnamefont{N.}~\bibnamefont{Alon}},
  \bibinfo{author}{\bibfnamefont{O.}~\bibnamefont{Goldreich}},
  \bibinfo{author}{\bibfnamefont{J.}~\bibnamefont{H{\aa}stad}},
  \bibnamefont{and} \bibinfo{author}{\bibfnamefont{R.}~\bibnamefont{Peralta}},
  \bibinfo{journal}{Random Structures \& Algorithms}
  \textbf{\bibinfo{volume}{3}}, \bibinfo{pages}{289} (\bibinfo{year}{1992}),
  \urlprefix\url{https://doi.org/10.1002/rsa.3240030308}.

\bibitem[{\citenamefont{Streltsov et~al.}(2017)\citenamefont{Streltsov, Adesso,
  and Plenio}}]{streltsov2017colloquium}
\bibinfo{author}{\bibfnamefont{A.}~\bibnamefont{Streltsov}},
  \bibinfo{author}{\bibfnamefont{G.}~\bibnamefont{Adesso}}, \bibnamefont{and}
  \bibinfo{author}{\bibfnamefont{M.~B.} \bibnamefont{Plenio}},
  \bibinfo{journal}{Rev. Mod. Phys.} \textbf{\bibinfo{volume}{89}},
  \bibinfo{pages}{041003} (\bibinfo{year}{2017}),
  \urlprefix\url{https://link.aps.org/doi/10.1103/RevModPhys.89.041003}.

\bibitem[{\citenamefont{Ma et~al.}(2016)\citenamefont{Ma, Yuan, Cao, Qi, and
  Zhang}}]{ma2016quantum}
\bibinfo{author}{\bibfnamefont{X.}~\bibnamefont{Ma}},
  \bibinfo{author}{\bibfnamefont{X.}~\bibnamefont{Yuan}},
  \bibinfo{author}{\bibfnamefont{Z.}~\bibnamefont{Cao}},
  \bibinfo{author}{\bibfnamefont{B.}~\bibnamefont{Qi}}, \bibnamefont{and}
  \bibinfo{author}{\bibfnamefont{Z.}~\bibnamefont{Zhang}},
  \bibinfo{journal}{npj Quantum Information} \textbf{\bibinfo{volume}{2}},
  \bibinfo{pages}{1} (\bibinfo{year}{2016}),
  \urlprefix\url{https://www.nature.com/articles/npjqi201621}.

\bibitem[{\citenamefont{Tsurumaru}(2022)}]{Tsurumaru2022equivalence}
\bibinfo{author}{\bibfnamefont{T.}~\bibnamefont{Tsurumaru}},
  \bibinfo{journal}{{IEEE} Transactions on Information Theory}
  \textbf{\bibinfo{volume}{68}}, \bibinfo{pages}{1016} (\bibinfo{year}{2022}),
  \urlprefix\url{https://doi.org/10.1109%2Ftit.2021.3126160}.

\bibitem[{\citenamefont{Koashi}(2009)}]{koashi2009simple}
\bibinfo{author}{\bibfnamefont{M.}~\bibnamefont{Koashi}}, \bibinfo{journal}{New
  J. Phys.} \textbf{\bibinfo{volume}{11}}, \bibinfo{pages}{045018}
  (\bibinfo{year}{2009}),
  \urlprefix\url{https://doi.org/10.1088/1367-2630/11/4/045018}.

\bibitem[{\citenamefont{Gottesman et~al.}(2004)\citenamefont{Gottesman, Lo,
  L\"{u}tkenhaus, and Preskill}}]{gottesman2004security}
\bibinfo{author}{\bibfnamefont{D.}~\bibnamefont{Gottesman}},
  \bibinfo{author}{\bibfnamefont{H.-K.} \bibnamefont{Lo}},
  \bibinfo{author}{\bibfnamefont{N.}~\bibnamefont{L\"{u}tkenhaus}},
  \bibnamefont{and} \bibinfo{author}{\bibfnamefont{J.}~\bibnamefont{Preskill}},
  \bibinfo{journal}{Quantum Info. Comput.} \textbf{\bibinfo{volume}{4}},
  \bibinfo{pages}{325} (\bibinfo{year}{2004}), ISSN \bibinfo{issn}{1533-7146},
  \urlprefix\url{http://dl.acm.org/citation.cfm?id=2011586.2011587}.

\bibitem[{\citenamefont{Bennett
  et~al.}(1992{\natexlab{b}})\citenamefont{Bennett, Brassard, Cr{\'e}peau, and
  Skubiszewska}}]{bennett1992practical}
\bibinfo{author}{\bibfnamefont{C.~H.} \bibnamefont{Bennett}},
  \bibinfo{author}{\bibfnamefont{G.}~\bibnamefont{Brassard}},
  \bibinfo{author}{\bibfnamefont{C.}~\bibnamefont{Cr{\'e}peau}},
  \bibnamefont{and} \bibinfo{author}{\bibfnamefont{M.-H.}
  \bibnamefont{Skubiszewska}}, in \emph{\bibinfo{booktitle}{Advances in
  Cryptology --- CRYPTO '91}}, edited by
  \bibinfo{editor}{\bibfnamefont{J.}~\bibnamefont{Feigenbaum}}
  (\bibinfo{publisher}{Springer Berlin Heidelberg}, \bibinfo{address}{Berlin,
  Heidelberg}, \bibinfo{year}{1992}{\natexlab{b}}), pp.
  \bibinfo{pages}{351--366},
  \urlprefix\url{https://link.springer.com/chapter/10.1007/3-540-46766-1_29}.

\bibitem[{\citenamefont{Wei et~al.}(2017)\citenamefont{Wei, Cai, Liu, Wang, and
  Gao}}]{wei2017generic}
\bibinfo{author}{\bibfnamefont{C.-Y.} \bibnamefont{Wei}},
  \bibinfo{author}{\bibfnamefont{X.-Q.} \bibnamefont{Cai}},
  \bibinfo{author}{\bibfnamefont{B.}~\bibnamefont{Liu}},
  \bibinfo{author}{\bibfnamefont{T.-Y.} \bibnamefont{Wang}}, \bibnamefont{and}
  \bibinfo{author}{\bibfnamefont{F.}~\bibnamefont{Gao}}, \bibinfo{journal}{IEEE
  Trans Comput} \textbf{\bibinfo{volume}{67}}, \bibinfo{pages}{2}
  (\bibinfo{year}{2017}),
  \urlprefix\url{https://doi.org/10.1109/TC.2017.2721404}.

\bibitem[{\citenamefont{Renner}(2008)}]{renner2008security}
\bibinfo{author}{\bibfnamefont{R.}~\bibnamefont{Renner}},
  \bibinfo{journal}{Int. J. Quantum Inf.} \textbf{\bibinfo{volume}{6}},
  \bibinfo{pages}{1} (\bibinfo{year}{2008}),
  \urlprefix\url{https://doi.org/10.1142/S0219749908003256}.

\bibitem[{\citenamefont{Brassard and Salvail}(1994)}]{brassard1994advances}
\bibinfo{author}{\bibfnamefont{G.}~\bibnamefont{Brassard}} \bibnamefont{and}
  \bibinfo{author}{\bibfnamefont{L.}~\bibnamefont{Salvail}},
  \bibinfo{journal}{Lecture Notes in Computer Science}
  \textbf{\bibinfo{volume}{765}}, \bibinfo{pages}{410} (\bibinfo{year}{1994}).

\bibitem[{\citenamefont{L\"utkenhaus}(1999)}]{lutkenhaus1999estimates}
\bibinfo{author}{\bibfnamefont{N.}~\bibnamefont{L\"utkenhaus}},
  \bibinfo{journal}{Phys. Rev. A} \textbf{\bibinfo{volume}{59}},
  \bibinfo{pages}{3301} (\bibinfo{year}{1999}),
  \urlprefix\url{https://link.aps.org/doi/10.1103/PhysRevA.59.3301}.

\bibitem[{\citenamefont{Ma et~al.}(2011{\natexlab{b}})\citenamefont{Ma, Fung,
  Boileau, and Chau}}]{ma2011universally}
\bibinfo{author}{\bibfnamefont{X.}~\bibnamefont{Ma}},
  \bibinfo{author}{\bibfnamefont{C.-H.~F.} \bibnamefont{Fung}},
  \bibinfo{author}{\bibfnamefont{J.-C.} \bibnamefont{Boileau}},
  \bibnamefont{and} \bibinfo{author}{\bibfnamefont{H.}~\bibnamefont{Chau}},
  \bibinfo{journal}{Comput Secur} \textbf{\bibinfo{volume}{30}},
  \bibinfo{pages}{172} (\bibinfo{year}{2011}{\natexlab{b}}), ISSN
  \bibinfo{issn}{0167-4048},
  \urlprefix\url{https://www.sciencedirect.com/science/article/pii/S0167404810001021}.

\bibitem[{\citenamefont{Lo}(2003)}]{lo2003method}
\bibinfo{author}{\bibfnamefont{H.-K.} \bibnamefont{Lo}}, \bibinfo{journal}{New
  J. Phys.} \textbf{\bibinfo{volume}{5}}, \bibinfo{pages}{36}
  (\bibinfo{year}{2003}),
  \urlprefix\url{https://doi.org/10.1088/1367-2630/5/1/336}.

\bibitem[{\citenamefont{Gallager}(1962)}]{gallager1962low}
\bibinfo{author}{\bibfnamefont{R.}~\bibnamefont{Gallager}},
  \bibinfo{journal}{IRE Trans. Inf. Theory} \textbf{\bibinfo{volume}{8}},
  \bibinfo{pages}{21} (\bibinfo{year}{1962}),
  \urlprefix\url{https://doi.org/10.1109/TIT.1962.1057683}.

\bibitem[{\citenamefont{MacKay}(1999)}]{mackay1999good}
\bibinfo{author}{\bibfnamefont{D.~J.} \bibnamefont{MacKay}},
  \bibinfo{journal}{IEEE Trans. Inf. Theory} \textbf{\bibinfo{volume}{45}},
  \bibinfo{pages}{399} (\bibinfo{year}{1999}),
  \urlprefix\url{https://doi.org/10.1109/18.748992}.

\bibitem[{\citenamefont{Ma}(2008)}]{Ma2008PhD}
\bibinfo{author}{\bibfnamefont{X.}~\bibnamefont{Ma}}, Ph.D. thesis,
  \bibinfo{school}{University of Toronto} (\bibinfo{year}{2008}),
  \bibinfo{note}{also available in arXiv:0808.1385},
  \urlprefix\url{https://arxiv.org/abs/0808.1385}.

\end{thebibliography}

\end{document}